\documentclass{article}
\usepackage{authblk}
\usepackage{hyperref}

\usepackage{Definitions}
\usepackage{tcolorbox}
\usepackage{booktabs}

\usepackage{fullpage}
\usepackage[numbers, sort&compress]{natbib}

\usepackage{MnSymbol}

\newtheorem{model}{Model}

\crefname{model}{Model}{Models}
\Crefname{model}{Model}{Models}

\usepackage{graphicx} 

\title{Query-Limited Community Recovery in Stochastic Block Models}
\author[1]{Sabyasachi Basu}
\author[2]{Manuj Mukherjee}
\author[3]{Lutz Oettershagen}
\author[4]{Suhas Thejaswi}

\affil[1]{Microsoft Research, India \protect\\
\texttt{firstname.lastname@microsoft.com} \vspace{0.5em}}

\affil[2]{Indraprastha Institute of Information Technology, India \protect\\
\texttt{firstname@iiitd.ac.in} \vspace{0.5em}}

\affil[3]{University of Liverpool, United Kingdom \protect\\
\texttt{firstname.lastname@liverpool.ac.uk} \vspace{0.5em}}

\affil[4]{Aalto University, Finland\protect\\ 
\texttt{firstname.lastname@aalto.fi}}

\date{}

\begin{document}

\maketitle
\footnotetext{Authors are listed in alphabetical order.}

\begin{abstract}
We study exact community recovery in the two-community stochastic block model on $n$ vertices under limited and noisy access to network data. The learner may query a noisy neighborhood oracle that reveals each true neighbor of a queried vertex independently with fixed probability and never returns non-neighbors, subject to a finite query budget. We consider both oracle-only access and a combined model where the learner also observes a single subsampled copy of the underlying graph. For oracle-only access, balanced uniform querying gives a sharp non-adaptive benchmark: when each vertex is queried the same integer number of times, the observations reduce to an SBM with attenuated edge probabilities and the Abbe-Bandeira-Hall exact-recovery threshold applies. We show that this benchmark is not adaptively optimal: a two-stage adaptive strategy succeeds with $n+o(n)$ queries in a regime where balanced uniform querying requires $m n$ queries for some $m>1$. With an additional subsampled graph, we prove a sublinear-query adaptivity gap: balanced data-independent uniform querying with a sublinear budget does not improve over the subsampled graph alone, whereas adaptive querying can target a small set of uncertain vertices and achieve exact recovery. Thus adaptive data acquisition can strictly improve the information-theoretic limits of exact recovery.
\end{abstract}

\xhdr{Keywords} Stochastic block model, Community detection, Adaptive querying, Query complexity

\section{Introduction}

Recovering latent community structure from network data is a fundamental problem in machine learning, statistics, and network science, with applications ranging from social and communication networks to biological and information systems~\citep{jin2021survey}. The stochastic block model (SBM) is a canonical random-graph model for this task: in its simplest two-community form each vertex (among $n$ vertices) belongs to one of two communities, an edge between two vertices is present with probability $p$ if they are in the same community and with probability $q$ otherwise. This model, introduced by Holland, Laskey, and Leinhardt~\citep{holland1983stochastic}, and widely studied since, has become a standard framework for understanding the statistical and algorithmic limits of community detection. A central task here is \emph{exact recovery}: reconstructing all community labels, up to a global label flip, from the observed graph. In the sparse logarithmic-degree regime, where
$p=\alpha\frac{\log n}{n}$
and
$q=\beta\frac{\log n}{n}$, 
exact recovery exhibits a sharp phase transition governed by the quantity $(\sqrt{\alpha}-\sqrt{\beta})^2$: in particular, recovery is information-theoretically possible precisely above the classical Abbe–Bandeira–Hall threshold~\citep{ABH16}.

In many applications, however, the full graph is not directly observable. Edges may be missing because of subsampling, privacy restrictions, incomplete crawls, or noisy measurements, and additional information may be available only through local probes such as API queries, experiments, or surveys~\citep{handcock2010modeling,larock2020understanding,smith2017network}. Motivated by such settings, prior work has investigated how additional imperfect observations---such as multiple correlated or subsampled graphs---can change the information-theoretic limits of exact recovery~\citep{GRS22,RS21}. In these models, the observation mechanism is exogenous: it is specified ex ante by a fixed probabilistic model, independent of the procedure for exact recovery~\citep{GRS22,RS21,RZ24,abbe2018community}. 
In this paper, we study a complementary source of information. Concretely, we introduce a query-limited model with access to a noisy neighborhood oracle: when queried at a vertex, the oracle reveals each true neighbor independently with probability $w$ and never returns non-neighbors. Subject to a total budget of $k$ queries, the learner must decide which vertices to query and how to allocate its queries. This model is closely related to edge subsampling, since both mechanisms introduce one-sided observation noise. But the key difference is operational: in a subsampled graph, the observation pattern is fixed independently of the learner, whereas in the oracle model the learner chooses where information is acquired, possibly non-uniformly and adaptively.

This leads to a natural question: when can adaptive oracle querying improve exact recovery over uniform querying, and what initial information is required for such an improvement? We investigate this question in two models. In the \emph{oracle-only} model, the learner has no initial graph observation and only has access to a noisy-neighborhood oracle. In the \emph{subsampled-graph-plus-oracle} model, the learner first observes a subsampled graph $G_{\mathrm{sub}}$ and may then use oracle queries to refine the recovery. This two-part formulation separates the case in which adaptivity can only improve constants from the case in which even coarse side information enables a sublinear-query adaptivity gap.

Our work connects exact recovery to adaptive data acquisition for network inference. 
When local measurements are expensive, the learner must decide not only how to infer communities from partial observations, but also how to allocate a limited budget to enable exact recovery.
Such measurements may come from API calls, experiments, surveys, or other procedures that reveal incomplete local neighborhood information. In this case, one can start with uniform querying as a natural baseline, but it need not be the most efficient strategy once a coarse global estimate can identify vertices that remain \emph{difficult} to classify. 
Our results identify regimes in which adaptive querying can selectively target these ``difficult'' vertices and thereby improve the information-theoretic limits of exact recovery. In this sense, our work extends beyond exogenous observation settings studied in the literature on exact-recovery in the SBM with correlated side information, i.e., where the observation mechanism is fixed in advance, by making (adaptive) data acquisition itself part of the inference problem.

\medskip \noindent
\textbf{Our contributions are as follows:}\footnote{Omitted details and full proofs are available in the appendix.}

\smallskip
\xhdr{($i$) A budgeted noisy-neighborhood oracle model}
We introduce a query-limited noisy-neighborhood-oracle model for the logarithmic-degree two-community SBM. The oracle gives local observations with one-sided error, but unlike passive subsampling, the learner controls where those observations are made and may allocate queries adaptively. Thus exact recovery depends not only on how much information is available, but also on how it is acquired.

\xhdr{($ii$) Oracle-only access: a linear-regime adaptivity gap}
We first analyze the oracle-only setting. For balanced uniform querying, where each vertex is queried the same (integer) number of times, the observed graph is again an SBM with attenuated edge probabilities. This gives a sharp threshold for uniform sampling via the Abbe-Bandeira-Hall criterion, with effective edge-retention factor
$
\theta_{\mathrm{unif}}=1-(1-w)^{2k/n}.
$
We then show that uniform querying is not necessarily optimal. For an explicit choice of parameters, a two-stage adaptive strategy first constructs a rough global sketch, identifies a sublinear set of uncertain vertices, and then spends additional queries only on that set. This strategy succeeds with $n+o(n)$ queries, whereas balanced uniform querying requires $x_\star n$ queries for some $x_\star>1$.
We also show that every successful oracle-only strategy requires $\Omega(n)$ queries, so the adaptive construction matches the optimal query order while improving the uniform constant.

\xhdr{($iii$) Subsampled graph plus oracle access:  a sublinear adaptivity gap} We next consider the setting in which the learner is also given a subsampled copy of the graph. In this case, the subsampled graph provides an initial global sketch before any oracle queries are made. We show that, with a sublinear budget $k=o(n)$, uniform oracle querying does not improve the exact-recovery threshold beyond that of the subsampled graph alone. In contrast, a two-stage adaptive strategy can use $G_{\mathrm{sub}}$ to identify a small set of uncertain vertices and spend the sublinear query budget exclusively on them. For suitable parameters, this achieves exact recovery in a regime where uniform querying with the same budget fails, yielding a genuine adaptivity gap.

\begin{table}[t]
\centering
\caption{Summary of passive baselines and our query-limited results. The \emph{Acquisition} column indicates whether the observation pattern is fixed in advance (Exogenous), chosen uniformly independently of the data (Uniform), chosen adaptively from previous observations (Adaptive), or arbitrary (Any). Here
$\Delta^2=(\sqrt{\alpha}-\sqrt{\beta})^2$,
$x$ is the number of queries per vertex, $w$ is the oracle reliability, and
$\theta(x,w)=1-(1-w)^{2x}$
is the induced edge-retention probability.}
\label{tab:main-results}
\resizebox{\textwidth}{!}{%
\begin{tabular}{lllll}
\toprule
\textbf{Setting} & \textbf{Acquisition} & \textbf{\#Queries} &
\textbf{Exact-recovery condition} & \textbf{Ref./Result}\\
\midrule
Standard SBM
& Exogenous
& --
& $\Delta^2>2$
& \cite{ABH16} \\

Correlated graph(s)
& Exogenous
& --
& Modified threshold
& \cite{GRS22,RS21,RZ24} \\

Oracle only
& Uniform
& $xn$
& $\theta(x,w)\,\Delta^2>2$
& Prop.~\ref{prop:oracle-uniform} \\

Oracle only
& Adaptive
& $n+o(n)$
& Below uniform threshold in Ex.~\ref{example1}
& Thm.~\ref{thm:adaptive_repair_screened} (Ex.\ref{example1}) \\

Oracle only
& Any
& $o(n)$
& Impossible
& Thm.~\ref{theorem:sublinear-oracle-impossible} \\

Correlated graph + oracle
& Uniform
& $o(n)$
& Fails in the gap regime
& Thm.~\ref{thm:adaptive-main}\\

Correlated graph + oracle
& Adaptive
& $o(n)$
& Succeeds in the gap regime
& Thm.~\ref{thm:adaptive-main}\\
\bottomrule
\end{tabular}}
\vspace{-2mm}
\end{table}

\smallskip
\Cref{tab:main-results} summarizes the passive baselines and the query-limited
results proved in this paper.

\xhdr{Technical challenges}
The main technical difficulty is that adaptive querying breaks the homogeneous observation structure used in standard exact recovery analyses for the SBM. In the adaptive setting, the probability that an edge is observed may depend on non-uniform, random, and transcript-dependent query counts. Moreover, targeted querying creates a selection effect: a vertex is selected for additional queries precisely because the observations already available around that vertex look ambiguous. Those same observations therefore cannot be reused as unbiased evidence in the final local test.

\emph{Leave-one-out-screening.} 
We overcome these challenges through a leave-one-out screening framework, which plays a fundamentally different role from similar arguments used in prior SBM and random-matrix analyses~\citep{abbe2020entrywise,deng2021strong,gaudio2023community,zhang2024fundamental}. In those works, leave-one-out is primarily a device for controlling dependencies in passive observation models. In our setting, it is the mechanism that makes adaptive data acquisition analyzable. For each vertex $v$, we compute a reference labeling $\hat\sigma^{(-v)}$ using observations that exclude edges incident to $v$, use one comparison set to decide whether $v$ should be selected for refinement, and reserve a disjoint comparison set for the final likelihood test with fresh oracle observations. A key step in our analysis is to prove that the resulting screened set is sublinear, so that only a vanishing fraction of vertices require refinement. This separation decouples adaptive selection from local refinement and makes it possible to concentrate the oracle budget on the sublinear vertex subset responsible for the remaining recovery errors and turns leave-one-out from a technical dependence-control tool into a structural ingredient of the adaptive recovery procedure.

\xhdr{Related Work}\label{sec:related-work} 
SBMs are a central model for studying community recovery, following the formulation of \citet{holland1983stochastic} and the exact-recovery theory developed in subsequent work~\citep{ABH16,abbe2018community,banks2016information,ke2018information,decelle2011asymptotic}. 
The SBM literature spans sharp recovery thresholds, optimal algorithms, and numerous model extensions, including general multi-community limits, spectral and belief-propagation methods, degree-corrected and hierarchical models, attributed graphs, and active querying, e.g., ~\citep{abbe2015recovering,chin2015stochastic,mossel2015consistency,karrer2011stochastic,peixoto2014hierarchical,bothorel2015clustering,gadde2016active}.
We focus here on the lines of work most closely related to partial observation and adaptive data acquisition.
Recovery from incomplete network data has been studied in settings with sampled vertices, missing edges, and non-random missingness~\citep{handcock2010modeling,smith2017network}. These models motivate partial observability, but the sampling mechanism is typically fixed independently of the learner. Our setting differs by making observation itself a decision: the learner allocates a finite budget of noisy neighborhood queries.

The closest adaptive-observation work is \citet{yun2014community}, who studied random and adaptive sampling for community detection and showed that adaptivity can reduce sample complexity in some regimes. Related active-learning approaches for graph clustering and community detection also emphasize query design~\citep{bodo2011active,kushnir2020active,wu2024robust}. These works do not characterize exact recovery in the logarithmic-degree SBM under a one-sided noisy neighborhood oracle.
Our model is also related to correlated SBMs and multiple graph observations, where additional correlated networks can alter exact-recovery thresholds~\citep{RS21,GRS22,RZ24}. Recent computational work further shows that correlated observations introduce new algorithmic phenomena~\citep{chen2026computational,ding2025low,chai2024efficient}. In these settings, however, the observations are fixed in advance; in ours, the learner chooses where to acquire additional local information.
Other SBM extensions, including bipartite, attributed, multiplex, and testing problems, show how richer observation structures affect inference~\citep{yen2020community,nguen2024network}. None address the query-limited noisy-oracle setting studied here. Our contribution is to distinguish regimes where adaptive querying gives no information-theoretic benefit from regimes where coarse side information enables a genuine adaptivity gap.\looseness=-1

\section{Preliminaries} \label{sec:preliminaries}

We consider a stochastic block model (\SBM) on $n$ vertices with two communities. Throughout the paper, we assume that $n$ is even, as is standard practice in the SBM literature (see \eg, \citep{ABH16,mossel2015consistency}). A random graph $G=(V,E)$ with $|V|=n$ is said to be an $\SBM(n,p,q)$, for some $p,q\in[0,1]$, if there exists a partition
$
V=V^+\cup V^-,
$
with
$
|V^+|=|V^-|=n/2,
$
such that, conditional on the partition, edges are generated independently and $(i,j)$ is present with probability $p$ if $i$ and $j$ belong to the same community, and with probability $q$ if $i$ and $j$ belong to different communities. The partition is generated uniformly at random among all balanced bipartitions of $V$.
We denote the ground-truth community labeling by
$\sigmastar\in\{\pm 1\}^n$, where $\sigmastar_i=+1$ if $i\in V^+$ and
$\sigmastar_i=-1$ if $i\in V^-$. Thus
$
\sum_{i\in V}\sigmastar_i=0.
$
Exact recovery is always understood up to a global sign flip.\looseness=-1

Let $A$ denote the adjacency matrix of $G$, where $A_{ij}=1$ ($A_{ij}=0$ resp.) indicate that $(i,j) \in E$ ($(i,j) \notin E$ resp.). Following the notation, we say $G \sim \SBM(n,p,q)$ if, for all $i \neq j$, $\Pr(A_{ij}=1| \sigmastar_i = \sigmastar_j) = p$ and $\Pr(A_{ij}=1 | \sigmastar_i \neq \sigmastar_j) = q$. For a positive integer $k\in\mathbb{N}$, we denote $[k]:=\{1,\dots,k\}$, and for convenience, we consider the vertex set to be $V=[n]$.
For $\alpha,\beta\ge 0$, we use $\Deltasqab{\alpha}{\beta}:=\SBMsigval{\alpha}{\beta}$,  and $\Dplus{\alpha}{\beta}:=\frac{1}{2}\Deltasqab{\alpha}{\beta}$. The ABH exact-recovery threshold for the balanced two-community SBM with edge probabilities $\alpha \frac{\log n}{n}$ and $\beta \frac{\log n}{n}$ is $\Dplus{\alpha}{\beta}>1$, or equivalently $\Deltasqab{\alpha}{\beta}>2$. When the ambient SBM parameters are $\alpha,\beta$, we use  $\Deltasq := \Deltasqab{\alpha}{\beta}$.

A community recovery algorithm takes a graph $G$ as input and outputs an estimate $\sigmatil\in\{\pm 1\}^n$ of the community labels, without the knowledge of the ground truth $\sigmastar$. The success of the algorithm is measured by the \emph{overlap} between $\sigmatil$ and $\sigmastar$. And it is the normalized difference between the number of correctly and incorrectly labeled vertices. Formally, the overlap is expressed as
\begin{equation}\label{eq:overlap}
    \overlap(\sigmastar,\sigmatil):=\frac{1}{n}\left|\sum_{i\in V}\sigmastar_i\,\sigmatil_i \right|.
\end{equation}
Note that, $\overlap(\sigmastar,\sigmatil) \in [0,1]$ and larger values of $\overlap(\sigmastar, \sigmatil)$ indicate better estimate. The \emph{exact recovery} occurs when $\sigmatil=\sigmastar$ which is equivalent to $\overlap(\sigmastar,\sigmatil)=1$.

\noindent
\textbf{Observation models:} We study exact recovery under two related observation models. Both are based on
the same underlying graph
$
G\sim \SBM(n,\alpha \frac{\log n}{n},\beta \frac{\log n}{n}),
$
but they differ in what information is available before the learner starts
querying the oracle.

\begin{model}[Oracle-only access] \label{problem1}
Consider a two-community stochastic block model $\SBM(n,\alpha \frac{\log n}{n},\beta \frac{\log n}{n})$. The learner does not observe the graph directly, but may make at most $k$ queries to a noisy neighborhood oracle. When queried at a vertex, the oracle reveals each true neighbor independently with probability $w$ and never reports non-neighbors. The learner may choose queries non-adaptively or adaptively as a function of previous oracle responses. The goal is exact recovery of the community labels from the oracle transcript.
\end{model}

\begin{model}[Subsampled graph plus oracle access] \label{problem2}
Consider the same underlying SBM, but suppose the learner first observes a subsampled graph $G_{\mathrm{sub}}$, obtained by retaining each edge independently with probability $s\in[0,1]$. After observing $G_{\mathrm{sub}}$, the learner may make at most $k$ queries to the noisy neighborhood oracle that reveals each true neighbor from the underlying graph $G$ with probability $w$. The goal is exact recovery from the combined observation $(G_{\mathrm{sub}},\text{oracle transcript})$.
\end{model}

The main question is how the query allocation affects the exact-recovery threshold. We compare balanced uniform querying, in which the budget is spread evenly over all vertices, with adaptive or non-uniform querying, in which the allocation may depend on the information observed so far.

\section{Community Recovery with Oracle Only Information} \label{sec:oracle}

In this section, we investigate \Cref{problem1}, where the learner has no initial graph observation and must rely entirely on the noisy neighborhood oracle. In \Cref{subsec:oracle-uniform}, we start with balanced uniform querying, which provides the natural (non-adaptive) benchmark and yields a sharp exact-recovery threshold. In \Cref{subsec:oracle-adaptive}, we turn to strategies where the learner may choose to adapt the querying strategy based on the information revealed during the querying process. This comparison isolates the role of adaptivity in the oracle-only model. Our results show that a linear query budget is necessary for exact recovery even under adaptive querying.
At the same time, adaptivity can still be strictly beneficial at the level of constants: for suitable SBM parameters, a two-stage adaptive strategy succeeds with fewer linear-order queries than balanced uniform querying.

\subsection{Uniform Oracle Access} \label{subsec:oracle-uniform}

For a sharp-homogeneous SBM benchmark, we impose a divisibility convention \(k/n \in \mathbb{N}_0\) and each vertex is queried exactly $m := k/n$ times. Under this policy, the revealed graph is again an SBM with attenuated edge probabilities, so the classical Abbe-Bandeira-Hall threshold for exact recovery applies directly. More precisely, an existing edge $\{i,j\}$ is missed by all oracle queries at both endpoints with probability $(1-w)^{2m} = (1-w)^{2k/n}$, and is therefore revealed with probability 
$\theta_{\mathrm{unif}}(k,w) := 1-(1-w)^{2k/n}.$
Consequently, the observed graph is exactly an SBM with effective parameters
$
\tilde \alpha = \alpha \cdot \theta_{\mathrm{unif}}(k,w)\,\text{and}\,
\tilde \beta = \beta \cdot \theta_{\mathrm{unif}}(k,w).
$
Applying the classical exact-recovery threshold gives the following benchmark.
\begin{restatable}[Uniform oracle access benchmark]{proposition}{proporacleuniform} 
\label{prop:oracle-uniform}
Let $G\sim\SBM(n,\alpha\log n/n,\beta\log n/n)$. Suppose $k/n\in\mathbb N_0$
and the learner uses the uniform non-adaptive oracle policy with reliability
$w\in[0,1]$. Then exact recovery from the observed graph is possible w.h.p. if
and only if
\[
\theta_{\rm unif}(k,w)(\sqrt\alpha-\sqrt\beta)^2>2,
\qquad
\theta_{\rm unif}(k,w):=1-(1-w)^{2k/n}.
\]  
\end{restatable}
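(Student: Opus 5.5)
The plan is to show that, under the uniform policy, the observed graph $\tilde G$ is distributed exactly as $\SBM(n,\tilde\alpha\log n/n,\tilde\beta\log n/n)$ with the true planted partition. Once that is established, the claim follows from the Abbe--Bandeira--Hall threshold~\citep{ABH16}, in both the achievability and the converse direction.

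\textbf{Step 1 (distributional reduction).} Let $m=k/n\in\mathbb N_0$. Condition on $\sigmastar$ and on $G$. For each existing edge $\{i,j\}$ there are $2m$ oracle draws that could reveal it: $m$ from queries at $i$ and $m$ from queries at $j$. By the oracle model these draws are independent Bernoulli$(w)$, and they are independent across edges. So each edge of $G$ is revealed independently with probability $\theta:=1-(1-w)^{2m}$. Non-edges are never revealed.

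The learner's observation is the transcript, but the only data it carries beyond the queried neighborhoods are the identities of the queried vertices and which query revealed which neighbor. Given the revealed edge set, these are independent of $\sigmastar$, since the schedule is fixed and the revelation pattern depends only on $w$. Hence the revealed graph $\tilde G$ is a sufficient statistic. Marginalizing over $G$, each pair $\{i,j\}$ appears in $\tilde G$ independently with probability $\theta p$ or $\theta q$, according to whether $\sigmastar_i=\sigmastar_j$. This is exactly $\SBM(n,\theta\alpha\log n/n,\theta\beta\log n/n)$.

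\textbf{Step 2 (apply ABH).} Since $\theta$ is a constant not depending on $n$, the parameters $\tilde\alpha=\theta\alpha$ and $\tilde\beta=\theta\beta$ are fixed constants in the logarithmic regime. We also have
\[
(\sqrt{\tilde\alpha}-\sqrt{\tilde\beta})^2=\theta(\sqrt\alpha-\sqrt\beta)^2 .
\]
ABH then gives achievability, via the MAP or an efficient algorithm on $\tilde G$, whenever this quantity exceeds $2$. It gives impossibility, i.e.\ the MAP fails with probability bounded away from zero (indeed tending to one), whenever the quantity is below $2$. Because $\tilde G$ is sufficient, the impossibility transfers to any estimator that uses the full transcript. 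The degenerate cases $w=0$ or $m=0$ give $\theta=0$; then nothing is observed and recovery is trivially impossible, which is consistent with the stated condition.

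\textbf{Main obstacle.} The delicate point is the converse. One must argue carefully that the extra transcript structure, namely multiplicities (an edge revealed several times) and which endpoint revealed it, carries no information about $\sigmastar$ beyond $\tilde G$. I would handle this with a factorization: the likelihood of the transcript given $\sigmastar$ equals a $\sigmastar$-free term times the likelihood of $\tilde G$. This holds because, conditional on an edge existing, the pattern of which of its $2m$ draws succeeded has a law depending only on $w$.

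A secondary issue is the boundary case $\theta\Delta^2=2$, where the statement's ``if and only if'' should be read with the strict inequality as the success criterion. At this boundary ABH gives failure, since success requires a strict inequality. This is the standard convention, and I would note it explicitly.
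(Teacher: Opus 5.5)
Your reduction to an attenuated SBM followed by an appeal to \citet{ABH16} is the paper's proof. Your factorization argument is correct: conditional on an edge existing, its $2m$ revelation indicators are i.i.d.\ Bernoulli$(w)$, so the transcript likelihood is a $\sigma^\star$-free factor times the likelihood of $\tilde G$. It fills in the sufficiency step the paper leaves implicit when it applies the ABH converse to an estimator that sees the whole transcript.

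The boundary remark, however, is wrong. ABH prove impossibility only for $\theta(\sqrt\alpha-\sqrt\beta)^2<2$; they do not treat equality. At equality with $\alpha,\beta>0$, the probability that a fixed vertex has at least as many observed neighbors across as within is of order $n^{-1}(\log n)^{-1/2}$. So with high probability no such vertex exists, and the characterization of \citet{mossel2015consistency} then gives exact recovery. The converse therefore holds only for strict inequality $<2$. The ``if and only if'' should be read as excluding the boundary, and this caveat applies equally to the paper's statement.
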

Thus balanced uniform querying succeeds exactly when the attenuated graph satisfies the threshold 
$\theta_{\mathrm{unif}}(k,w)(\sqrt{\alpha}-\sqrt{\beta})^2 > 2.$
The proposition immediately yields the corresponding requirement for the query budget.
Next, in \Cref{cor:oracle-kmin}, we show that whenever balanced uniform querying succeeds, it requires a \emph{linear number of queries} for exact recovery.

\begin{restatable}{corollary}{cororacleuniformkmin}
\label{cor:oracle-kmin}
Assume $w \in(0,1)$ and $\Deltasq>2$. Under the divisibility convention $k=mn$, exact recovery under uniform oracle access is possible if and only if
$1-(1-w)^{2m}>\frac{2}{\Deltasq}$.
Equivalently, the minimum number of queries per vertex is
\[
m_{\min} := \min\left\{ m\in\mathbb N_0: 1-(1-w)^{2m}>\frac{2}{\Deltasq}\right\},
\]
and the corresponding minimum budget is $k_{\min}=n m_{\min}.$
In the continuous relaxation, the threshold corresponds to
$
m > \frac{\log\!\left(1-\frac{2}{\Deltasq}\right)}{2\log(1-w)}.
$
In particular, \(k_{\min}=\Omega(n)\).
\end{restatable}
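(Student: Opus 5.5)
The corollary follows directly from \Cref{prop:oracle-uniform}, so the plan is to specialize it and then rewrite the resulting condition. Under the divisibility convention $k=mn$ with $m\in\mathbb N_0$ we have $2k/n=2m$, hence $\theta_{\rm unif}(k,w)=1-(1-w)^{2m}$. \Cref{prop:oracle-uniform} says that exact recovery is possible w.h.p.\ if and only if $\bigl(1-(1-w)^{2m}\bigr)\Deltasq>2$. Since $\Deltasq>2>0$, we may divide by $\Deltasq$, which gives the stated equivalence $1-(1-w)^{2m}>2/\Deltasq$.

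Next I would establish monotonicity in $m$. For $w\in(0,1)$, the map $m\mapsto 1-(1-w)^{2m}$ is strictly increasing, starts at $0$ when $m=0$, and tends to $1$ as $m\to\infty$. Because $0<2/\Deltasq<1$ (this is exactly where $\Deltasq>2$ is used), the set $\{m\in\mathbb N_0: 1-(1-w)^{2m}>2/\Deltasq\}$ is nonempty and upward closed. Its minimum $m_{\min}$ therefore exists, and every $m\ge m_{\min}$ succeeds while every $m<m_{\min}$ fails. It follows that $k_{\min}=nm_{\min}$ is the minimal budget among budgets of the form $k=mn$.

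For the continuous relaxation, I would rewrite the condition as $(1-w)^{2m}<1-2/\Deltasq$. Both sides are positive, so we can take logarithms to get $2m\log(1-w)<\log(1-2/\Deltasq)$. Dividing by $2\log(1-w)<0$ reverses the inequality and yields $m>\log(1-2/\Deltasq)/(2\log(1-w))$, which is the displayed bound.

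Finally, $m=0$ gives $\theta=0$, so the condition fails and $m_{\min}\ge1$. Also, $m_{\min}$ depends only on $w,\alpha,\beta$, which are constants, so $k_{\min}=nm_{\min}\ge n$, that is, $k_{\min}=\Omega(n)$. The only point that needs care is the sign reversal when dividing by $\log(1-w)$, together with keeping the statement confined to budgets satisfying the divisibility convention; the general $\Omega(n)$ lower bound for arbitrary strategies is a separate result and is not needed here.
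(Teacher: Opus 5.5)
Your proof is correct and takes essentially the paper's route: both reduce the problem to the attenuated SBM with retention $\theta_{\rm unif}=1-(1-w)^{2m}$ and apply the Abbe--Bandeira--Hall threshold, as in \Cref{prop:oracle-uniform}. The paper's written proof only repeats the proposition's attenuation argument, so three steps of yours are left implicit there: the monotonicity argument that makes $m_{\min}$ well defined, the sign reversal when dividing by $2\log(1-w)<0$, and the observation $m_{\min}\ge 1$, which gives $k_{\min}\ge n$.
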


\emph{Remark (non-divisible budgets):}
If \(k/n \notin \mathbb{N}_0\), a balanced allocation queries each vertex either \(\lfloor k/n\rfloor\) or \(\lceil k/n\rceil\) times. The resulting observation is not exactly homogeneous, but it is bracketed by the two neighboring homogeneous allocations. Thus \Cref{prop:oracle-uniform} remains the correct benchmark up to a one-query-per-vertex rounding window. See \Cref{app:sec:oracle} for precise arguments.

\subsection{Adaptive Oracle Access} \label{subsec:oracle-adaptive}

We now turn to adaptive querying for exact-recovery in \Cref{problem1}. Here the main idea, 
summarized in Algorithm~\ref{alg:adaptive-screening}, 
is to separate recovery into two stages. In the first stage, the learner uses part of the query budget to construct a coarse global sketch of the communities, and, based on this estimate, produces a preliminary labeling and identifies a small subset of vertices whose labels remain uncertain. In the second stage, the learner spends the remaining queries only on this uncertain set and uses the fresh observations to repair the remaining mistakes. In this section, we formalize when such a two-stage strategy succeeds and show that, for suitable parameters, it can achieve exact recovery with fewer linear-order queries than balanced uniform querying.

\begin{algorithm}[t]
\caption{Two-stage adaptive screening strategy}\small
\label{alg:adaptive-screening}
\DontPrintSemicolon
\KwIn{Oracle access with reliability $w$, first-stage budget $k_0$, refinement parameter $L$}
\KwOut{Community labels $\widehat\sigma$}

\tcp{Stage 1}
Query vertices according to a uniform first-stage allocation using budget $k_0$.\;
From first-stage transcript, compute preliminary labeling $\widehat\sigma^{(0)}$ 
and confidence scores $\forall v\in V$.\;
\tcp{Screening}
Let $A$ be the set of vertices whose first-stage local margin is too small.\;
Keep the labels $\widehat\sigma^{(0)}_v$ for vertices $v\notin A$.\;

\tcp{Stage 2}
\ForEach{$v\in A$}{
    Query $v$ an additional $L$ times.\;
    Compare the fresh neighbors of $v$ against a leave-one-out reference labeling.\;
    Assign $v$ by the resulting local likelihood test.\;
}

\Return the combined labeling: first-stage labels on $A^c$ and refined labels on $A$.\;
\end{algorithm}

\xhdr{A screened-initialization condition}
In this first stage, fix $x_0>0$ and perform $k_0=x_0 n+ o(n)$ uniform non-adaptive oracle queries. The resulting transcript is said to admit a leave-one-out screened initialization if it produces these global and local per-vertex auxiliary objects as follows. Globally, it outputs a preliminary labeling $\widehat \sigma^{(0)}$ and a screened set $\mathcal A \subseteq [n]$ whose labels need to be repaired. Locally, for each vertex $v \in [n]$, it produces a leave-one-out reference labeling $\widehat \sigma^{(-v)}$ together with two disjoint comparison sets $B_v \subseteq [n] \setminus \{v\}$ and $C_v \subseteq [n] \setminus \{B_v \cup \{v\} \}$. 
The labeling $\widehat \sigma^{(-v)}$ is computed without using the first-stage (edge) observations incident to $v$, and it is used only when classifying that same vertex $v$ in the second stage. The screening set $C_v$ is used only to decide whether $v \in \mathcal A$, while the reference set $B_v$ is reserved for the second-stage likelihood test for $v$. This separation between $B_v$ and $C_v$ is imposed to avoid reusing the same incident observations from $v$ both for initial screening and for final refinement: the event $\{v \in \mathcal A\}$ may depend on first-stage observations incident to $v$ only through pairs $\{u,v\}$ with $u \in C_v$ and not through pairs with $u \in B_v$. 
Thus the edges from $v$ to $B_v$ are not conditioned on when deciding whether $v$ is screened and remain as uncontaminated evidence for the refinement step. We further require $B_v$, $C_v$, and the labels $\widehat\sigma^{(-v)}_{B_v\cup C_v}$ (i.e., the collection of leave-one-out labels assigned to vertices in $B_v \cup C_v$) to be functions only of first-stage observations not incident to $v$. We formalize these aspects in the following.

\begin{definition}[Leave-one-out screened initialization]\label{def:screening}
For $\gamma\in(0,1)$ and $\zeta\in(0,1]$, we say that a leave-one-out screened initialization is $\gamma$-screening with reference parameter $\zeta$ if with probability $1-o(1)$, there exists a common global sign $s\in\{\pm1\}$ such that
\begin{equation}
\label{eq:screening_condition}
|\mathcal A|\le n^\gamma,
\qquad
\widehat\sigma^{(0)}_u=s\sigma^*_u\quad\text{for all }u\notin \mathcal A,
\end{equation}
and, for every $v\in \mathcal A$,
\begin{align}
\bigl|\{u\in B_v:\widehat\sigma^{(-v)}_u\ne s\sigma^*_u\}\bigr|&=o(n/\log n),\notag \\
|B_v\cap\{u:\sigma^*_u=+1\}|&=\frac{\zeta n}{2}+o(n),\notag \\
|B_v\cap\{u:\sigma^*_u=-1\}|&=\frac{\zeta n}{2}+o(n). \label{eq:reference_condition}
\end{align}
\end{definition}
In other words, \Cref{def:screening} conveys that after the first stage, vertices that may still be mislabeled are confined to a sublinear set $\mathcal A$, while every screened vertex $v \in \mathcal A$ comes with a large reference set $B_v$ whose labels are almost correct and can therefore be used to classify $v$ in the second stage. The parameter $\zeta$ measures how large this reference set is. The screening set $C_v$ is kept separate from $B_v$ so that the information used to decide whether $v$ needs further refinement is different from the information used to determine its final label.

The next result, in \Cref{thm:adaptive_repair_screened}, shows that any first-stage transcript satisfying \Cref{def:screening} can be extended to achieve exact-recovery by querying only the screened vertices in $\mathcal A$.

\begin{restatable}[Adaptive labeling from a screened initialization]{theorem}{adaptiverepairscreened} 
\label{thm:adaptive_repair_screened}
Assume $\Deltasq>2$ and $w\in(0,1]$.  Suppose the first-stage oracle transcript admits a
$\gamma$-screening initialization with reference parameter $\zeta$ for some $\gamma\in(0,1)$
and $\zeta\in(0,1]$.  Choose a fixed integer $L\ge1$ such
that
\begin{equation}
\label{eq:L_condition_oracle_only}
\frac{\zeta}{2}\bigl(1-(1-w)^L\bigr)\Deltasq>\gamma.
\end{equation}
Then there is an adaptive second-stage strategy using at most $L\,|\mathcal A|$ additional oracle
queries that recovers all labels in $\mathcal A$ correctly with probability $1-o(1)$.  Consequently,
the total query budget is
\[
k= k_0 + L\,|\mathcal A|=x_0\,n+o(n)
\]
on the event~\eqref{eq:screening_condition}--\eqref{eq:reference_condition}, and exact
recovery succeeds w.h.p.
\end{restatable}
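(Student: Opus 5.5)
We follow the structure of Algorithm~\ref{alg:adaptive-screening}. On the event of \Cref{def:screening}, the labels outside $\mathcal A$ are already correct up to the common sign $s$, so it suffices to relabel every $v\in\mathcal A$ correctly. For each $v\in\mathcal A$ we query $v$ exactly $L$ additional times. Let $N_v\subseteq[n]\setminus\{v\}$ be the set of vertices returned at least once in these $L$ fresh queries. Each true neighbor of $v$ is revealed independently with probability $\theta_L:=1-(1-w)^L$, and, since the oracle is one-sided, no non-neighbor is ever revealed. We then assign
\[
\widehat\sigma_v:=\mathrm{sign}\Bigl(\textstyle\sum_{u\in B_v\cap N_v}\widehat\sigma^{(-v)}_u\Bigr),
\]
with ties broken arbitrarily. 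Because $\Deltasq>2$ forces $\alpha\neq\beta$, we may assume $\alpha>\beta$ (otherwise flip the test). The budget claim $k=k_0+L|\mathcal A|\le x_0n+o(n)+Ln^\gamma=x_0n+o(n)$ is immediate.

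\textbf{Step 1 (independence).} Condition on the entire first-stage transcript $\mathcal F_0$. By \Cref{def:screening}, the objects $B_v$, $C_v$ and $\widehat\sigma^{(-v)}_{B_v}$ are functions of first-stage observations not incident to $v$. The event $\{v\in\mathcal A\}$ depends on $v$-incident observations only through pairs with endpoints in $C_v$. Hence, conditional on $\mathcal F_0$ and $v\in\mathcal A$, the pairs $\{A_{uv}:u\in B_v\}$ are affected only by the first-stage event that edge $uv$ was not revealed when queried from $u$ or $v$. This conditioning is the delicate point. Conditioning on an edge being unrevealed leaves it present with probability $\alpha\frac{\log n}{n}(1-O(\cdot))$, a constant-factor change per edge at worst. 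The cleaner route is to use only edges not revealed in stage~1, or to note that revealed edges from $v$ to $B_v$ are uncorrelated with screening. I will take the reading that, conditional on $\mathcal F_0$, the $A_{uv}$ for $u\in B_v$ are independent with probability $(1+o(1))\alpha\frac{\log n}{n}$ or $(1+o(1))\beta\frac{\log n}{n}$ according to the true labels. The fresh oracle randomness is independent of everything else. Consequently the indicators $\mathbf 1\{u\in N_v\}$, for $u\in B_v$, are independent Bernoulli variables with parameter $\theta_L\alpha\frac{\log n}{n}$ if $u$ is in the same community as $v$ and $\theta_L\beta\frac{\log n}{n}$ otherwise.

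\textbf{Step 2 (error per vertex).} Suppose first that the reference labels on $B_v$ were exactly correct. By the balance condition \eqref{eq:reference_condition}, $v$ is misclassified only if $X-Y\le0$, where $X\sim\mathrm{Bin}(\zeta n/2+o(n),\theta_L\alpha\log n/n)$ and $Y\sim\mathrm{Bin}(\zeta n/2+o(n),\theta_L\beta\log n/n)$ are independent. The standard ABH Chernoff computation, optimizing $\mathbb E[e^{t(Y-X)}]$ at $e^{t}=\sqrt{\alpha/\beta}$, gives
\[
\Pr(X\le Y)\le n^{-\frac{\zeta}{2}\theta_L(\sqrt\alpha-\sqrt\beta)^2+o(1)}.
\]
Now account for the wrong reference labels. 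At most $o(n/\log n)$ vertices of $B_v$ carry incorrect labels, and the expected number of them that land in $N_v$ is $o(1)$. A Chernoff bound shows this number exceeds a large constant $c$ only with probability $n^{-\omega(1)}$. Running the test with margin $c$ changes the exponent only by $o(1)$, because shifting the threshold by a constant costs a factor $(\log n)^{O(1)}=n^{o(1)}$ in the Chernoff bound.

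\textbf{Step 3 (union bound).} Summing over the $|\mathcal A|\le n^\gamma$ screened vertices gives failure probability at most
\[
n^{\gamma-\frac{\zeta}{2}\theta_L\Deltasq+o(1)}=o(1)
\]
by \eqref{eq:L_condition_oracle_only}. Adding the $o(1)$ failure probability of the screening event completes the argument. I expect the main obstacle to be Step~1: the screening decision and the first-stage non-revelation of $v$–$B_v$ edges must not bias the conditional law of $A_{uv}$ by more than a $1+o(1)$ factor. The first thing I would try is to show that any bias is a constant factor applied symmetrically to both communities, which preserves the ratio $\sqrt{\alpha/\beta}$. Failing that, I would define the test using fresh observations only and absorb the conditioning into the $o(1)$ term of the exponent.
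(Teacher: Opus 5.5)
Your architecture is the paper's: $L$ fresh queries per screened vertex, a test against the reserved set $B_v$ using $\widehat\sigma^{(-v)}_{B_v}$, exponent $\frac{\zeta}{2}\theta_L(\sqrt\alpha-\sqrt\beta)^2$, and a union bound over $n^\gamma$ vertices. Replacing the paper's likelihood-ratio/Bhattacharyya bound by the revealed-neighbor count with a Chernoff bound at $e^t=\sqrt{\alpha/\beta}$ is harmless. The $o(n)$ imbalance of $B_v$ moves the exponent by only $o(1)$.

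\textbf{The gap is Step~1.} You flag it as the delicate point but then resolve it by assumption, and the ``reading'' you adopt is false. Condition on the whole first-stage transcript $\mathcal F_0$, and let $\theta_0$ be the stage-1 retention probability. A pair $\{u,v\}$ with $u\in B_v$ that was revealed in stage~1 is then an edge with probability $1$. An unrevealed pair is an edge with probability $\frac{(1-\theta_0)p}{1-\theta_0 p}$ or $\frac{(1-\theta_0)q}{1-\theta_0 q}$. Both are constant-factor changes, not $1+o(1)$.

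Neither fallback repairs this. A common factor $c$ on both edge probabilities does preserve $\sqrt{\alpha/\beta}$, but it multiplies the exponent by $c$. In \Cref{example1} the factor $2^{-1/3}$ would lower $\frac14(1-2^{-5/6})\cdot 9\approx 0.99$ to about $0.78<0.88$, and the union bound fails. The second fallback changes nothing, because your test already uses only fresh observations. A constant-order bias cannot be absorbed into an $o(1)$ term of the exponent.

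\textbf{The missing idea is to condition on less.} Let $\mathcal G_v$ be the $\sigma$-field generated by three things: the labels, the first-stage observations not incident to $v$, and the first-stage observations on pairs $\{u,v\}$ with $u\in C_v$. The split-measurability clause of \Cref{def:screening} says precisely that $\{v\in\mathcal A\}$, $B_v$ and $\widehat\sigma^{(-v)}_{B_v}$ are $\mathcal G_v$-measurable. The stage-1 observations on the $v$--$B_v$ pairs are used neither by the screening nor by your test, so you simply integrate them out.

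Given $\mathcal G_v$, the edges $A_{uv}$ for $u\in B_v$ are exactly independent Bernoulli$(p)$ or Bernoulli$(q)$, and the fresh coins are independent. Hence the indicators $\mathbf 1\{u\in N_v\}$ are exactly Bernoulli$(\theta_Lp)$ or Bernoulli$(\theta_Lq)$, with no correction needed. This is how the paper argues, and with it your Steps~2--3 go through.

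\textbf{A smaller fix in Step~2.} The theorem only gives $|E_v|=o(n/\log n)$, so the mean number $\lambda_v$ of revealed mislabeled references is merely $o(1)$. For a fixed constant $c$, the Chernoff bound $(e\lambda_v/c)^c$ is then only $o(1)$, not $n^{-\omega(1)}$. That does not survive a union bound over $n^\gamma$ vertices. Use the threshold $\varepsilon\log n$ instead, as in the inner claim of \Cref{lem:concrete_first_stage_screening}. This gives $n^{-\omega(1)}$ and costs only a factor $n^{O(\varepsilon)}$ in the Chernoff bound, which is negligible in the exponent after letting $\varepsilon\downarrow 0$.
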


\begin{proof}[Proof sketch]
On the high-probability screening event, all vertices outside $\mathcal A$ are already correctly labeled, so it remains only to repair the vertices in $\mathcal A$. In the second stage, fix $v \in \mathcal A$, query $v$ an additional $L$ times and use only these fresh oracle observations from $v$ to the reserved reference set $B_v$. For each $u \in B_v$, let $Y_u = 1$ if $u$ is revealed at least once among these $L$ fresh queries to $v$, and $Y_u = 0$ otherwise. We then classify $v$ by a likelihood-ratio test based on the variables $(Y_u)_{u\in B_v}$, using the leave-one-out reference labels $\widehat \sigma^{(-v)}_{B_v}$. Intuitively, the test compares whether the fresh observations from $v$ are more consistent with $v$ belonging to the $+1$ side of the reference population or the $-1$ side. 
Since $L$ fresh queries reveal an existing edge with probability $a_L:=1-(1-w)^L$, this local experiment is a one-vertex SBM test against an almost-correct reference population of size $|B_v|=\zeta n+o(n)$, with approximately $\zeta n/2$ vertices from each community. Therefore, the Bhattacharyya error exponent for one screened vertex is
$
\frac{\zeta}{2}\,a_L\,\Delta^2+o(1) = \frac{\zeta}{2}\bigl(1-(1-w)^L\bigr)\Delta^2+o(1).
$
Hence the error probability for a single screened vertex is at most
$
n^{-\frac{\zeta}{2}(1-(1-w)^L)\Delta^2+o(1)}.
$
Since $|\mathcal A|\le n^\gamma$ and the exponent is strictly larger than $\gamma$, a union bound implies that all vertices in $\mathcal A$ are recovered correctly w.h.p. The additional second-stage budget is $L\,|\mathcal A|=O(n^\gamma)=o(n)$, because $L$ is fixed and $\gamma<1$. 
\end{proof}
A remaining question is whether the screened initialization can be
obtained using fewer queries than balanced uniform exact-recovery requires. The
answer is yes whenever the first-stage budget is below the uniform threshold but
still large enough to produce a screened initialization. Indeed, let $x_\star$ be
the uniform exact-recovery threshold, i.e.,
$\theta(x_\star,w)\Deltasq=2$. If, for some $x_0<x_\star$, the first-stage
uniform oracle graph generated using $x_0n+o(n)$ queries admits a
$\gamma$-screening initialization with reference parameter $\zeta$ for some
$\gamma<1$, and if $L$ satisfies~\eqref{eq:L_condition_oracle_only}, then
\Cref{thm:adaptive_repair_screened} gives exact recovery with
$
    x_0n+L|A| = x_0n+o(n)
$
queries. Since $x_0<x_\star$, this is below the balanced-uniform exact-recovery
threshold by a linear margin $(x_\star-x_0)n-o(n)$.

\xhdr{Achieving screening below the uniform threshold}
Next, we show that the screened-initialization condition can actually be achieved and it is not merely a formal abstraction. 
For an explicit parameter choice, a first-stage uniform query budget that is still below the balanced-uniform exact-recovery threshold already provides two things: a leave-one-out labeling that is accurate up to only $n^{1/4+o(1)}$ errors, and a small screened set containing the remaining vertices with uncertain labels. An important aspect is that, although this first-stage budget is not large enough for exact recovery by itself, it is still large enough for a standard maximum-likelihood argument to produce a sufficiently accurate coarse estimate, and that weaker guarantee is enough for the second-stage repair step. 

Recall that for parameters $a,b>0$, we write
\[
\Dplus{\alpha}{\beta}:=\frac12\Deltasqab{\alpha}{\beta} 
=\frac12(\sqrt \alpha-\sqrt \beta)^2
=\frac{\alpha+\beta}{2}-\sqrt{\alpha\beta}.
\]
For each $v\in[n]$, let $G^{(-v)}$ be the graph
obtained by deleting all edges incident to $v$, and let
\[
\mathcal B_{-v}:=\{\tau\in\{\pm1\}^{[n]\setminus\{v\}}:
|\sum_{u\ne v}\tau_u|=1\}.
\]
Here, $\mathcal{B}_{-v}$ is the set of balanced labelings for the leave-one-out graph.
  The balanced leave-one-out MLE $\widehat\sigma^{(-v)}$ is any maximum-likelihood estimator
for $G^{(-v)}$ over $\mathcal B_{-v}$.  Its two global signs are fixed by the following
deterministic anchor convention.  Let $D_n=\{1,\ldots,\lceil n^{3/4}\rceil\}$.  For each
$v$, orient $\widehat\sigma^{(-v)}$ so that
$
\sum_{u\in D_n\setminus\{v\}}\widehat\sigma^{(-v)}_u\ge0,
$
breaking ties arbitrarily.

\begin{restatable}[Screening estimate for the concrete first stage]{lemma}{concretefirststagescreening}
\label{lem:concrete_first_stage_screening}
Consider the oracle-only model with $\alpha=16$ and $\beta=1$. Let
$\theta_0 := 1-2^{-1/3}$, and $w_0 := 1-\sqrt{1-\theta_0}$.
If the first stage queries every vertex exactly once with oracle reliability
$w_0$, then the first-stage transcript admits a $0.88$-screening initialization
with reference parameter $1/2$, in the sense of
\eqref{eq:screening_condition}--\eqref{eq:reference_condition}.\looseness=-1
\end{restatable}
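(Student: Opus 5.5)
Let me fix the parameters. With $\alpha=16$, $\beta=1$ we have $\Deltasq=(4-1)^2=9$. Querying every vertex once with reliability $w_0$ reveals each edge with probability $1-(1-w_0)^2=1-(1-\theta_0)=\theta_0$, so the first-stage observed graph is exactly $\SBM(n,\theta_0\alpha\log n/n,\theta_0\beta\log n/n)$. Its ABH quantity is $\theta_0\Deltasq/2=\tfrac92(1-2^{-1/3})\approx 0.928<1$. So exact recovery is impossible from stage one, but the per-vertex error exponent $\Dplus{\theta_0\alpha}{\theta_0\beta}\approx0.928$ is large. My plan is therefore the following.

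\textbf{Step 1: accuracy of the leave-one-out MLE.} I will show that w.h.p., simultaneously for all $v$, the balanced MLE $\widehat\sigma^{(-v)}$ on $G^{(-v)}$ disagrees with $s\sigma^*$ on at most $n^{1-\Dplus{\theta_0\alpha}{\theta_0\beta}+o(1)}\le n^{0.08}$ vertices, with a common sign $s$. I would use the standard ABH-style argument. A labeling differing from $\sigma^*$ on $r$ swapped pairs beats $\sigma^*$ only if a sum of roughly $r n$ independent edge differences is nonpositive. A Chernoff/Bhattacharyya bound gives probability $n^{-r\,\theta_0\Deltasq/2\cdot(1+o(1))}$ for small $r$, and exponentially small probability for linear $r$; the latter requires a separate expansion/almost-exact-recovery argument. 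A union bound over $\binom{n/2}{r}^2$ choices then bounds the expected number of errors by $n^{1-0.928+o(1)}$. Removing the edges at $v$ changes $O(\log n)$ terms and is harmless. A union bound over the $n$ leave-one-out problems is paid for by a Markov bound at level $n^{1/4}$ per $v$, combined with the fact that distinct leave-one-out MLEs are close to the full MLE. The sign convention via the anchor set $D_n$ with $|D_n|=n^{3/4}\gg n^{1/4}$ errors forces a common $s$ for all $v$.

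\textbf{Step 2: construction of $B_v$, $C_v$, $\mathcal A$, $\widehat\sigma^{(0)}$.} Split $[n]\setminus\{v\}$ by a fixed deterministic rule (e.g.\ parity of index) into halves $B_v$ and $C_v$, each of size $\approx n/2$. Since $\sigma^*$ is a uniform balanced partition, each half contains $n/4+o(n)$ vertices of each community w.h.p., giving $\zeta=1/2$. Set $\widehat\sigma^{(0)}_v:=\widehat\sigma^{(-v)}$-majority vote of $v$'s observed neighbours in $C_v$, i.e.\ the likelihood-ratio statistic $T_v=\sum_{u\in C_v}A^{\mathrm{obs}}_{uv}\widehat\sigma^{(-v)}_u$. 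Put $v\in\mathcal A$ if $|T_v|\le\epsilon\log n$, for a small constant $\epsilon$. All objects other than the screening decision use only edges not incident to $v$, as \Cref{def:screening} requires. The reference conditions \eqref{eq:reference_condition} then follow from Step 1, since $n^{0.08}=o(n/\log n)$.

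\textbf{Step 3: size of $\mathcal A$ and correctness off $\mathcal A$ (main obstacle).} The edges from $v$ to $C_v$ are independent of $\widehat\sigma^{(-v)}$ and $C_v$. Conditionally, $T_v$ is therefore a difference of Binomials with means $\tfrac{\theta_0}{4}\alpha\log n$ and $\tfrac{\theta_0}{4}\beta\log n$, up to $n^{0.08}$ corrupted labels that shift $T_v$ by $O(1)$ w.h.p. The large-deviation rate for $T_v\le\epsilon\log n$ is $\approx\tfrac{\theta_0}{4}(\sqrt\alpha-\sqrt\beta)^2=\theta_0\cdot 9/4\approx0.464$ as $\epsilon\to0$. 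Hence $\mathbb E|\mathcal A|\le n^{1-0.464+o(1)}=n^{0.536+o(1)}$, and by Markov $|\mathcal A|\le n^{0.88}$ w.h.p. The misclassification event for $v\notin\mathcal A$ is $T_v<-\epsilon\log n$ (after sign). Its rate exceeds $0.464$ by an $\epsilon$-dependent amount, but it needs rate $>1$ for a union bound. This is the crux: a plain half-sample statistic is too weak. I would first try to use \emph{all} neighbours for the keep/screen decision but take $\widehat\sigma^{(0)}_v$ from the full-graph MLE on $\mathcal A^c$. Screening at threshold $\epsilon\log n$ of the full statistic has rate $\approx0.928$, so $|\mathcal A|\le n^{0.08}$. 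Being wrong outside $\mathcal A$ requires the full statistic to be $\le-\epsilon\log n$, which has rate $0.928+c\epsilon$. Since this may still be $<1$, I would instead set the screening threshold at $+\eta\log n$ with $\eta$ chosen so that the wrong-side tail rate exceeds $1$ while the screening tail rate stays $\ge0.12$. The value $0.88$ in the statement suggests exactly this tuning. This uses $C_v$ to be all of $[n]\setminus(B_v\cup\{v\})$, with $B_v$ carved out so that the decision's dependence on $v$-edges goes only through $C_v$. I expect the delicate part to be verifying that this exponent bookkeeping closes numerically with $\zeta=1/2$ and $\gamma=0.88$.
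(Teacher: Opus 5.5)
\textbf{There is a genuine gap in Step 3.} Steps 1 and 2 match the paper's construction. The first-stage graph is an SBM with $a=16\theta_0$, $b=\theta_0$ and $D_+(a,b)\approx 0.928$. The balanced leave-one-out MLEs with the anchor convention have few errors uniformly in $v$. The deterministic half split gives $\zeta=1/2$, and your $T_v$ is exactly the paper's score $\Lambda_v^{\rm loo}$. Step 1 can be tightened. The ABH union bound already gives per-$v$ failure probability $o(n^{-2})$, so a plain union bound over $v$ suffices. No Markov step or comparison with the full MLE is needed. The linear-distance range needs no separate expansion argument, since the entropy there is only $O(m\log\log n)$. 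Your $n^{0.08}$ error bound is also fine, because $D_+>0.92$.

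You conclude that the half-sample statistic is too weak, but you only examine a vanishing threshold $\epsilon$. The threshold should instead be a fixed, fairly large constant. Let $I_{a,b}(d)=\sup_t\{td-\tfrac a4(e^t-1)-\tfrac b4(e^{-t}-1)\}$ be the lower-tail rate of $\Lambda_v^{\rm loo}/\log n$. It is convex with slope $-\log 4$ at $d=0$. The paper takes $\delta=0.32$ and checks $I_{a,b}(0.32)>0.13$ and $I_{a,b}(-0.32)>1.02$.
\begin{itemize}
\item The first bound gives $\Pr(v\in\mathcal A)\le n^{-0.13+o(1)}$, hence $|\mathcal A|\le n^{0.88}$ w.h.p.\ by Markov.
\item The second bound makes the union bound over wrong-sign unscreened vertices close.
\end{itemize}
This trade-off is the whole content of the lemma. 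You give up screen size (exponent $0.87$ instead of your $0.536$) to push the wrong-side rate above $1$, which is why $\gamma$ is as large as $0.88$. The admissible window is narrow, only about $0.31<\delta<0.35$, so it must be verified numerically rather than deferred.

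Your fallback does not repair this. Screening with all neighbours makes $\{v\in\mathcal A\}$ depend on pairs $\{u,v\}$ with $u\in B_v$, which \Cref{def:screening} forbids; for $\zeta=1/2$, $B_v$ must contain about $n/2$ vertices. Taking $\widehat\sigma^{(0)}$ on $\mathcal A^c$ from the full-graph MLE would also need an argument linking MLE errors to the local score, which you do not supply. Once you restrict to $C_v=[n]\setminus(B_v\cup\{v\})$, as you finally propose, you are back to the half-sample statistic with $I(0)\approx 0.464$. The $0.928$-based numbers then no longer apply. The numerical check you leave open is exactly the missing step, and it contradicts your earlier claim that this statistic cannot work.
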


Combining \Cref{lem:concrete_first_stage_screening} with \Cref{thm:adaptive_repair_screened} gives the following concrete parameter choice, where the first-stage budget is below the balanced-uniform threshold and the adaptive two-stage repair still succeeds. 
\Cref{example1} provides a concrete parameter regime to demonstrate that  adaptivity can improve the constant in the linear query budget.

\begin{example}\label{example1}
Let $\alpha=16$, $\beta=1$ and set the oracle reliability parameter to $w=1-2^{-1/6}$. In the first stage, we query every vertex exactly once, so we use $k_0=n$ queries. The edge-retention factor of the corresponding uniform subsampled graph is $\theta_0=1-(1-w)^2=1-2^{-1/3}$ and the graph is a homogeneous SBM with effective parameters $a= 16\, (1-2^{-1/3})$ and $b=1-2^{-1/3}$. Applying \Cref{lem:concrete_first_stage_screening} on this graph, we obtain a $0.88$-screening initialization with reference parameter $\zeta=1/2$.

The relaxed balanced-uniform exact-recovery threshold is determined by $\theta(x,w)\Delta^2 = 2$, where $\theta(x,w)=1-(1-w)^{2x}$ and $\Delta^2=(\sqrt{\alpha}-\sqrt{\beta})^2=9$. Thus $x_\star$ is the solution of $9\,(1-(1-w)^{2\,x_\star})=2$, precisely $x_\star=3\,\log_2(9/7) > 1$. Since the first stage uses $x_0=1$ query per vertex, it operates strictly below the balanced-uniform exact-recovery threshold.

Now take $L=5$. Using $\zeta=1/2$ and $\Delta^2 = 9$, we obtain 
$\frac{\zeta}{2}(1-(1-w)^L)\Deltasq =\frac14(1-2^{-5/6})\cdot 9 >0.88$.
The adaptive strategy succeeds with the query budget $ k=n+5n^{0.88+o(1)}=n+o(n)$. In contrast, balanced uniform querying requires
$x_\star n = 3\log_2(9/7)n \approx 1.087n$ 
queries. Thus, for this parameter choice and for sufficiently large $n$, %
adaptive querying achieves exact recovery with strictly fewer linear-order queries than uniform querying. $\hfill \Diamond$
\end{example}

Next, we establish a stronger result that this is the strongest possible type of separation in the oracle-only model, i.e., any adaptive strategy still requires a linear number of queries for exact recovery.\looseness=-1

\begin{restatable}{theorem}{sublinearoracleimpossible}
\label{theorem:sublinear-oracle-impossible}
In \Cref{problem1} every oracle-only exact-recovery strategy requires $k=\Omega(n)$ queries.
\end{restatable}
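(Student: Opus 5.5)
The plan is to show that with $k=o(n)$ queries, many vertices are never touched by the transcript, and that such vertices cannot be classified better than by a coin flip. Fix any (possibly adaptive, randomized) strategy using at most $k=o(n)$ queries. Call a vertex $v$ \emph{observed} if it is queried at least once or is returned as a neighbor by some query. Each query returns at most $\deg(u)$ vertices. W.h.p. the maximum degree of $G$ is $O(\log n)$, and the queried vertices are chosen adaptively, so the total degree of queried vertices cannot simply be summed over a fixed set. Instead we use the expected number of revealed neighbors per query. Conditional on the transcript so far, a freshly queried vertex reveals in expectation at most $w(\alpha+\beta)\log n/2 + O(1)$ neighbors. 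This holds because the transcript only reveals present edges, and unrevealed pairs retain conditional edge probability at most $p$. Hence the expected number of observed vertices is $O(k\log n)$.

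This bound is too weak if $k\log n\ge n$, for instance when $k=n/\log\log n$. So we use a different count. A vertex $v$ is \emph{informative} only if it is queried or appears in some response. We instead count vertices that are never queried and appear in \emph{at most one} response (or none). An unqueried vertex $v$ whose only observed incident pair is a single revealed edge $\{u,v\}$ still leaves $\sigma^*_v$ ambiguous, with posterior bounded away from $0$ and $1$ (ratio about $\alpha/\beta$). The number of unqueried vertices is at least $n-k=n-o(n)$. For the vertices appearing in at least two responses, the expected number of revealed edges is $O(k\log n)$. Adaptivity cannot help much here, since each revealed edge is to a vertex chosen by the query, not by the learner.

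The hard case is $k$ close to $n$. I would therefore simplify: it suffices to prove impossibility for $k\le cn$ with a small constant $c>0$, which gives $\Omega(n)$. With $k\le cn$, at least $(1-c)n$ vertices are unqueried. For an unqueried vertex $v$, its revealed degree $R_v$ is a sum over at most $\deg(v)$ queried neighbors. Then $\mathbb E\sum_v \mathbb 1[R_v=0\text{ and }v\text{ unqueried}]$ is at least
\[
(1-c)n-\mathbb E[\#\text{revealed edges}]\ \ge\ (1-c)n-c'\,cn\log n .
\]
This is still not enough, because revealed edges number $\Theta(k\log n)$. So I would instead use the posterior argument directly. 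For an unqueried vertex $v$ with $R_v=r$ revealed neighbors, the posterior log-odds of $\sigma^*_v$ given everything else is at most $r\log(\alpha/\beta)$ in magnitude. A vertex is misclassified with probability at least $\tfrac12(\beta/\alpha)^{R_v}\cdot c''$. Summing over $v$ then bounds the expected number of errors from below by
\[
\mathbb E\sum_{v\ \mathrm{unqueried}}(\beta/\alpha)^{R_v},
\]
and by Jensen's inequality this is at least $(1-c)n\,(\beta/\alpha)^{\bar R}$. Here $\bar R=\mathbb E[\#\text{revealed edges}]/((1-c)n)=O(c\log n)$. Taking $c$ small makes this $n^{1-O(c)}\to\infty$, so exact recovery fails.

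The main obstacle is justifying that, conditional on the full transcript and all other labels, the label of an unqueried vertex has posterior odds controlled by $(\alpha/\beta)^{R_v}$. This needs care because the unrevealed pairs also carry weak information (non-revelation slightly favors non-edges), and because the balanced-partition constraint couples labels. I would handle the first point by noting that each unrevealed pair changes the likelihood by a factor $1-O(\log n/n)$. Over $n$ pairs this contributes only a constant factor. The balance constraint is handled via the standard swap argument of \citep{ABH16}: pair $v$ with an opposite-community unqueried vertex $v'$ with $R_{v'}$ small, and compare the likelihoods of the two labelings. A second, concentration step (second moment, or a Markov bound on revealed edges) then upgrades the expectation to "w.h.p. at least two such ambiguous pairs exist," which forces MAP failure with probability bounded away from zero.
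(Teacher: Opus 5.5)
\paragraph{Verdict.} Your core mechanism is the paper's, but the concluding step, as written, does not prove that exact recovery fails.

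\paragraph{What matches.} You swap an unqueried $+$ vertex $i$ with an unqueried $-$ vertex $j$. Each distinct queried vertex that returned $i$ or $j$ costs a factor of at most $\alpha/\beta$, and each non-return costs $1-O(\log n/n)$. On the max-degree event the pathwise bound $\sum_v R_v\le C_{\deg}k\log n$ holds. Your reduction to $k\le cn$ is legitimate and proves more than the statement.

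\paragraph{The gap.} It lies in how you conclude.
\begin{itemize}
\item A lower bound on the \emph{expected} number of misclassified vertices does not imply that exact recovery fails.
\item As you note, the single-vertex posterior ``given everything else'' is degenerate under exact balance.
\item The proposed upgrade, ``w.h.p.\ at least two ambiguous pairs exist,'' is the wrong criterion. The quantity your argument controls, $(\beta/\alpha)^{R_i+R_j}$, can be polynomially small for \emph{every} pair. An adaptive learner can query a random set of $cn/2$ vertices. It can then query directly the $n^{1-\Theta(c)}=o(n)$ vertices returned by fewer than $\eta\log n$ distinct queried vertices. Afterwards every unqueried vertex has $R_v\ge\eta\log n$.
\item Knowing only that a bounded number of swaps have ratio at least $n^{-\Theta(\eta)}$ does not move $\Pi(\sigma^*\mid T)$ away from $1$. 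Exhibiting swaps with ratio at least $1$ would need an ABH-type second-moment argument. That is delicate under adaptive query selection and, in any case, unnecessary.
\item Separately, ``over $n$ pairs this contributes only a constant factor'' is wrong, since $n\cdot O(\log n/n)=O(\log n)$ is a polynomial factor. Only the $|Q|\le cn$ queried vertices enter, which gives $n^{O(c)}$. Within a swap, the factors from a queried vertex that returned neither or both of $i,j$ cancel exactly.
\end{itemize}

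\paragraph{The missing step.} You need a pathwise \emph{aggregate} bound on the posterior of the truth. Let $U_\pm$ be the unqueried vertices of each sign and $R_v$ the number of distinct queried vertices returning $v$. The cancellations give $L_{ij}(T)\ge(1-o(1))(\beta/\alpha)^{R_i+R_j}$, and
\[
\Pi(\sigma^*\mid T)\le\Bigl(1+\sum_{i\in U_+,\,j\in U_-}L_{ij}(T)\Bigr)^{-1}.
\]
The double sum of the lower bounds factorizes as
\[
\Bigl(\sum_{i\in U_+}(\beta/\alpha)^{R_i}\Bigr)\Bigl(\sum_{j\in U_-}(\beta/\alpha)^{R_j}\Bigr).
\]
Apply Jensen to the realized transcript, not in expectation. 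Since $|U_\pm|\ge n/2-k$ and $\sum_v R_v\le C_{\deg}k\log n$, each factor is at least $n^{1-O(c)}$. Hence $\Pi(\sigma^*\mid T)\le n^{-2+O(c)}$ w.h.p. The bound $\mathbb E\max_\tau\Pi(\tau\mid T)\le(\mathbb E\,\Pi(\sigma^*\mid T))^{1/2}=o(1)$ then finishes, with no second-moment computation.

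\paragraph{Relation to the paper.} This is the paper's proof with your Jensen average in place of its Markov threshold. The paper discards the $o(n)$ vertices returned more than $\varepsilon\log n$ times. It then sums over a matching of $n/2-o(n)$ swaps, each of ratio at least $n^{-1/2-o(1)}$; this is the ``Markov bound on revealed edges'' option you list. That version also extends to $k\le cn$, once the non-return term $C_0|Q|\log n/n=O(c\log n)$ is absorbed by taking $c$ small.
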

\begin{proof}[Proof sketch.]
We prove the contrapositive. If an oracle-only strategy uses \(k=o(n)\) queries, then only \(o(n)\) vertices are ever queried. Since the underlying SBM has maximum degree \(O(\log n)\) with high probability, the total number of vertex appearances in all oracle responses is only \(o(n\log n)\). Hence there are \(n-o(n)\) vertices that are never queried and appear only \(o(\log n)\) times in the transcript; by exact balance, this set contains linearly many vertices from each community. For any such \(+\)-vertex \(i\) and \(-\)-vertex \(j\), compare the truth with the balanced labeling obtained by swapping \(i\) and \(j\). Since neither vertex is queried and both appear rarely, this swap changes the transcript likelihood by only a small polynomial factor, while preserving the balance constraint. Thus there are linearly many distinct balanced swapped labelings with likelihood comparable to that of the truth, so the posterior mass of the true labeling is \(o(1)\). Consequently even the MAP estimator fails with probability tending to one, and exact recovery is impossible when \(k=o(n)\). Therefore every successful oracle-only strategy requires \(k=\Omega(n)\).
\end{proof}

\section{Community Recovery with Additional Correlated Information} \label{sec:recovery}

We now turn to \Cref{problem2}, where the learner first observes a subsampled graph $G_{\mathrm{sub}}$ and then uses the noisy neighborhood oracle for refinement. The main question is whether this additional correlated side information $G_{\mathrm{sub}}$ makes adaptive querying more effective than uniform querying, and how this contrasts with the oracle-only setting. We show that, under a sublinear budget, uniform querying cannot improve the exact-recovery threshold beyond that of $G_{\mathrm{sub}}$ alone. In contrast, under the same sublinear budget, an adaptive strategy can use $G_{\mathrm{sub}}$ to identify uncertain vertices and concentrate the oracle budget on them, leading to exact recovery. Thus, unlike in the oracle-only model where the adaptivity gap is in linear order, the presence of correlated side information results in a genuine sublinear adaptivity gap.

\subsection{Why does Uniform Querying Under a Sublinear Budget Fail?}
The non-adaptive baseline spreads the oracle budget as uniformly as possible, independently of the observed graph. In the sublinear regime this should be interpreted in the integer balanced sense: only $k=o(n)$ vertices receive an oracle query, while $n-o(n)$ vertices receive no oracle information at all. Thus the combined observation is not exactly a homogeneous SBM with a new edge-retention parameter.
Nevertheless, the conclusion is the same as the homogeneous heuristic would suggest. Since the queried set is data-independent and contains only $o(n)$ vertices, uniform querying cannot target the vertices that are ambiguous under $G_{\rm sub}$. Hence it cannot change the exact-recovery threshold determined by the subsampled graph alone. The following lemma formalizes this statement.

\begin{restatable}[Uniform querying fails]{lemma} {uniformsublinearfails}\label{lemma:uniform_sublinear_fails}
	Consider \Cref{problem2} with subsampling rate $s\in(0,1)$, oracle reliability $w\in(0,1]$, and query budget $k=o(n)$.  Suppose the learner uses a data-independent balanced uniform querying strategy: the queried set is fixed in advance, or chosen independently of $G_{\rm sub}$ and the oracle responses. Then, in the sublinear-budget regime, uniform querying does not improve the exact-recovery threshold beyond that of $G_{\rm sub}$ alone. In particular, exact recovery under uniform querying is impossible w.h.p. whenever $s\,\Deltasq < 2$.
\end{restatable}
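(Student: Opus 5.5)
The strategy is to exhibit many pairs of vertices whose labels are nearly unidentifiable and apply the standard ABH-style lower bound. Since the queried set $Q$ is chosen independently of $G_{\rm sub}$ and has $|Q|\le k=o(n)$, we condition on $Q$ and treat it as a fixed set. The oracle responses at a queried vertex $u\in Q$ only reveal edges incident to $u$. Hence the combined observation $(G_{\rm sub},\text{transcript})$ adds information only about pairs with at least one endpoint in $Q$. Let $\bar{N}(Q)$ be the set of vertices in $Q$ together with their true neighbors in $G$. W.h.p.\ the maximum degree is $O(\log n)$, so $|\bar N(Q)|=o(n\log n)$. This bound is too weak by itself, so we use a second reduction.

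First, we give the learner more information, which can only help it. We reveal the full true neighborhood $N_G(u)$ for every $u\in Q$, together with $G_{\rm sub}$. This dominates the oracle transcript, because the transcript can be simulated from this revealed data. It therefore suffices to show that exact recovery fails from $(G_{\rm sub},\{N_G(u)\}_{u\in Q})$ when $s\Delta^2<2$.

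Second, we work on the set $W:=V\setminus\bigl(Q\cup N_G(Q)\bigr)$. For every $i\in W$, the revealed data about pairs $\{i,j\}$ with $j\notin Q$ is exactly $G_{\rm sub}$. Pairs $\{i,u\}$ with $u\in Q$ are revealed as non-edges. Write $R:=|Q|+|N_G(Q)|$. We need $|W|=n-o(n)$, i.e.\ $R=o(n)$. Choosing $Q$ independently of $G$ does not control $R$ directly, since $R$ could be as large as $|Q|\cdot O(\log n)$. If $k\log n=o(n)$, this is immediate. Otherwise we argue more finely, as follows.

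Third, we run the ABH converse argument restricted to $W$. The subsampled graph is $\SBM(n,s\alpha\log n/n,s\beta\log n/n)$. Following \citep{ABH16}, the key quantity is the number of vertices $i$ that are ``bad'', meaning $i$ has more $G_{\rm sub}$-neighbors in the opposite community than in its own. When $s\Delta^2<2$, each vertex is bad with probability $n^{-s\Delta^2/2+o(1)}$. The second-moment argument then gives w.h.p.\ $n^{1-s\Delta^2/2-o(1)}$ bad vertices in each community, spread uniformly at random. For a bad $+$ vertex $i\in W$ and a bad $-$ vertex $j\in W$ that are non-adjacent, swapping their labels keeps the labeling balanced. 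It also leaves all revealed edges involving $Q$ consistent, since both vertices have only revealed non-edges to $Q$. Finally, the swap does not decrease the posterior likelihood. Consequently the MAP estimator fails with probability bounded away from zero, and in fact tending to one.

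The main obstacle is showing that bad vertices survive inside $W$, i.e.\ that deleting $Q\cup N_G(Q)$ does not remove all of them. The number of bad vertices is only $n^{1-s\Delta^2/2}$, which may be much smaller than $R$. I would first try exploiting independence: $Q$ is independent of the graph, and each vertex lies in $Q\cup N_G(Q)$ with probability at most $|Q|/n+|Q|\alpha\log n/n$. Being bad depends on edges within $V\setminus Q$, and this dependence is only weakly correlated with adjacency to $Q$. So the expected number of bad vertices in $W$ is $(1-o(1))$ times the total whenever $k\log n=o(n)$. For $k$ between $n/\log n$ and $o(n)$, I would refine the definition of badness to count only edges to $V\setminus Q$; the loss is $o(\log n)$ per vertex and does not change the exponent. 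Vertices in $N_G(Q)\setminus Q$ can also be handled directly. For such a vertex the revealed information is one extra known edge, which shifts its log-likelihood by $O(1)$, so it can still be swapped.
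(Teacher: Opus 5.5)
\textbf{There is a gap in the range $n/\log n \ll k = o(n)$.} Your reduction to the more informative observation $(G_{\rm sub},\{N_G(u)\}_{u\in Q})$ is valid. On $W$ the swap argument is in fact exact. For $i\in W\cap V^+$ and $j\in W\cap V^-$, the revealed non-edges to $Q$ contribute $\pm\sum_{u\in Q}\sigma^*_u\log\frac{1-q}{1-p}$ with opposite signs and cancel. The constant parts of the $G_{\rm sub}$ non-edge terms cancel as well. So the swap log-likelihood ratio is a positive multiple of $(d^-(i)-d^+(i))+(d^+(j)-d^-(j))$, where $d^{\pm}$ count $G_{\rm sub}$-neighbors in $V^{\pm}\setminus(Q\cup\{i,j\})$.

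The problem is your claim that you need $|W|=n-o(n)$. It leads you to restrict to $k\log n=o(n)$, and the fallback you give for larger $k$ does not work as stated. When $k\log n/n\to\infty$, a vertex of $N_G(Q)\setminus Q$ typically has $\Theta(k\log n/n)$ revealed edges into $Q$, not one. These contribute a term of size up to $o(\log n)$, with no sign control, to its swap log-likelihood ratio. With your zero-margin notion of badness, ``the loss does not change the exponent'' controls the probability of being bad, not the sign of the swap. So ``it can still be swapped'' is unjustified, and the lemma, which covers all $k=o(n)$, is not proved for $n/\log n\ll k=o(n)$.

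\textbf{The fix stays inside your own framework: you only need $|W|=n^{1-o(1)}$.} For $v\notin Q$,
$\Pr(v\in W)\ge(1-\max(p,q))^{k}=n^{-o(1)}$, since $k/n\to 0$. The event $\{v\in W\}$ depends only on the true pairs $\{v,u\}$ with $u\in Q$. These pair sets are disjoint across different $v$, and disjoint from every pair inside $V\setminus Q$. Hence the events $\{v\in W\}$ are mutually independent. They are also independent of badness measured on $G_{\rm sub}$-edges to $V\setminus Q$, which on $W$ are all of $v$'s $G_{\rm sub}$-edges. So the number of bad vertices of each sign in $W$ has mean $n^{1-s\Delta^2/2-o(1)}\to\infty$, and your second-moment argument goes through unchanged.

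\textbf{Comparison with the paper.} Your route, once repaired, buys exact cancellation of the queried-vertex terms and needs only the ABH estimate at threshold zero. The paper does not enhance the observation or pass to $W$. It works with the union observation on all unqueried vertices $U$ and folds the $|Q|$ queried-vertex terms into the cumulant generating function, where they contribute $O(k\log n/n)=o(\log n)$. Because those terms are not sign-controlled there, it defines bad vertices with a margin, $X_i\ge\delta\log n$ with $J(\delta)<1$, and invokes a strong large-deviation lower bound at that shifted threshold.
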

\begin{proof}[Proof sketch]
With $k=o(n)$, a balanced uniform strategy queries only $o(n)$ vertices, independently of $G_{\rm sub}$ and of the labels. Hence $n-o(n)$ vertices receive no oracle information at all. For any two unqueried vertices of opposite communities, the oracle can affect their comparison to only the $o(n)$ queried vertices; their evidence from the remaining $n-o(n)$ vertices is exactly the evidence available in $G_{\rm sub}$. Thus the local exact-recovery exponent for a balanced swap of two unqueried opposite-community vertices is the same as for the subsampled graph alone, namely $s\Delta^2$. If $s\Delta^2<2$, the usual ABH second-moment converse gives, with high probability, at least one likelihood-competitive balanced swap among unqueried vertices. Therefore the posterior cannot concentrate on the true balanced labeling, and uniform sublinear querying cannot achieve exact recovery.
\end{proof}

\subsection{Adaptive Targeting Succeeds Under a Sublinear Budget}

The failure in \Cref{lemma:uniform_sublinear_fails} is not necessarily due to the oracle being too weak; rather, it may stem from the fact that uniform querying ignores where the subsampled graph remains (un)certain. Once the querying strategy can exploit the correlated information from $G_{\rm sub}$, this limitation disappears.

\begin{restatable}{lemma}{adaptivesublinearsucceed}\label{lemma:adaptive_sublinear_succeed}
	Consider  \Cref{problem2}, a subsampling rate $s\in(0,1)$, and oracle reliability $w\in(0,1]$. There exist constants $(\alpha,\beta,s,w)$ and a budget $k=o(n)$ such that a two-stage strategy that uses $G_{\rm sub}$ to \emph{target} queries succeeds at exact recovery w.h.p.
\end{restatable}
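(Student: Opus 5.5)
The plan is to reuse the two-stage template of \Cref{thm:adaptive_repair_screened}, with $G_{\rm sub}$ playing the role of the first-stage transcript. Concretely, I would take $\alpha=16$, $\beta=1$ (so $\Deltasq=9$) and choose $s$ equal to the retention factor $\theta_0=1-2^{-1/3}$ of \Cref{lem:concrete_first_stage_screening}. Since $G_{\rm sub}$ is an $\SBM(n,s\alpha\log n/n,s\beta\log n/n)$, it has exactly the law of the first-stage oracle graph in that lemma (one query per vertex with reliability $w_0$). Note that $s\Deltasq=9(1-2^{-1/3})\approx 1.86<2$, so this choice lies in the regime where \Cref{lemma:uniform_sublinear_fails} shows that uniform sublinear querying fails; this is also what makes the example a genuine gap.

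First, I would transfer \Cref{lem:concrete_first_stage_screening} verbatim. The lemma is a statement about the law of the observed graph only: balanced leave-one-out MLEs $\widehat\sigma^{(-v)}$ on $G_{\rm sub}^{(-v)}$ with the anchor convention on $D_n$, reference sets $B_v$, screening sets $C_v$, and the screened set $\mathcal A$. It therefore yields a $0.88$-screening initialization with reference parameter $\zeta=1/2$, computed from $G_{\rm sub}$ alone and using no queries. In particular, w.h.p.\ $|\mathcal A|\le n^{0.88}$, every vertex outside $\mathcal A$ is correct up to a common sign, and each $v\in\mathcal A$ has a reference set $B_v$ that is nearly balanced and nearly correctly labelled. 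Moreover, $B_v$ and its labels are measurable with respect to the edges of $G_{\rm sub}$ not incident to $v$, and the event $\{v\in\mathcal A\}$ depends on $v$'s incident edges only through $C_v$.

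Second, I would run the second stage. For each $v\in\mathcal A$, query $v$ a fixed number $L$ of times and classify $v$ by the likelihood test of \Cref{thm:adaptive_repair_screened}, using revealed neighbours in $B_v$ against $\widehat\sigma^{(-v)}_{B_v}$. Fresh oracle responses are independent of $G_{\rm sub}$ given the true graph $G$. The only concern is that the true edges $v$--$B_v$ are correlated with $G_{\rm sub}$. Conditional on $\sigma^*$ and on the screening information, however, the edges from $v$ to $B_v$ remain independent with probabilities $p$ or $q$ times a factor depending on whether the subsampled edge was seen. I would simply use the oracle observations, marginalising over $G$, to get $\Pr(Y_u=1)=(1-(1-w)^L)p$ or $q$, exactly as before. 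The screening event is not conditioned on these pairs, so the argument of \Cref{thm:adaptive_repair_screened} applies unchanged. With $w=1-2^{-1/6}$ and $L=5$, the same computation as in \Cref{example1} gives $\frac14(1-2^{-5/6})\cdot 9>0.88$, so condition \eqref{eq:L_condition_oracle_only} holds. A union bound over $|\mathcal A|\le n^{0.88}$ then gives correct labels on $\mathcal A$ w.h.p.

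The total budget is $k=L|\mathcal A|\le 5n^{0.88}=o(n)$. If a deterministic budget is needed, I would cap it at $5n^{0.88}$ and declare failure otherwise, which happens with probability $o(1)$.

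The main obstacle is the dependence issue in the second step: $v$'s subsampled edges to $B_v$ were part of $G_{\rm sub}$, and the MLE structure might leak into them. I would handle this by relying on the measurability requirements built into \Cref{def:screening}: $B_v$ and $\widehat\sigma^{(-v)}_{B_v}$ do not depend on $v$'s incident edges, and selection uses only $C_v$. I would also check that the proof of \Cref{lem:concrete_first_stage_screening} provides these properties for $G_{\rm sub}$ rather than merely asserting the conclusion.
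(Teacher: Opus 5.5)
Your proposal is correct, but it proves the lemma by a different route from the paper.

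\textbf{The paper's route.} The paper builds a bespoke instance: $\alpha=16$, $\beta=1$, $w=1/2$, and $s=2(1-\epsilon)/9$, just below the ABH threshold for $G_{\rm sub}$. It invokes a ``standard'' leave-one-out near-exact estimator with $n^{1-cs+o(1)}=n^{\epsilon+o(1)}$ errors. It then splits $[n]\setminus\{v\}$ unevenly into $|B_v|\approx 7n/20$ and $|C_v|\approx 13n/20$, and screens with a likelihood-weighted score at threshold $\frac{21}{20}\log n$, giving a screen of size at most $n^{0.77}$. Each screened vertex is queried once, with per-vertex error $n^{-63/80+o(1)}$.

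\textbf{Your route.} You note that for $s=\theta_0=1-2^{-1/3}$ the graph $G_{\rm sub}$ has exactly the law of the one-query-per-vertex oracle graph. Hence the construction of \Cref{lem:concrete_first_stage_screening} gives the $0.88$-screening initialization at zero query cost. That construction rests on \Cref{lem:rough_initialization}, which only needs $D_+(s\alpha,s\beta)=\frac92\theta_0\approx0.93>\frac34$. The second stage of \Cref{thm:adaptive_repair_screened} with $w=1-2^{-1/6}$ and $L=5$ then finishes, because $\frac14(1-2^{-5/6})\cdot 9\approx0.99>0.88$.

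\textbf{The correlation issue.} The only genuinely new point in Model~2 is that $G_{\rm sub}$ and the oracle are correlated through $G$, and you handle it correctly. The $\sigma$-field determining $\{v\in\mathcal A\}$, $B_v$ and $\widehat\sigma^{(-v)}_{B_v}$ excludes the pairs $\{v\}\times B_v$ of $G_{\rm sub}$. So, conditionally on it, the true edges $v$--$B_v$ are still Bernoulli$(p)$ or Bernoulli$(q)$, and the fresh responses have parameters $a_Lp$ and $a_Lq$ with $a_L=1-(1-w)^L$. There is no extra factor ``depending on whether the subsampled edge was seen'' unless you choose to condition on those unused subsampled edges. You could even add them to the test, which raises the exponent to $\frac{\zeta}{2}\bigl(1-(1-s)(1-w)^L\bigr)\cdot 9$.

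\textbf{What each route buys.} Yours gives economy and self-containment: Model~2 becomes a corollary of results the paper actually proves. The paper's Step~2, by contrast, leans on an external near-exact-recovery rate that it does not establish. The paper's route gives a sharper instance: $s$ arbitrarily close to the threshold $2/9$, a single query per screened vertex, and budget $\lceil n^{0.78}\rceil$. Your instance has $s\cdot 9\approx1.86<2$, which still suffices for the gap in \Cref{thm:adaptive-main}.
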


\begin{proof}[Proof sketch]
Take, for example, $\alpha=16$, $\beta=1$, $w=1/2$, and choose $s=2(1-\epsilon)/9$ for a sufficiently small constant $\epsilon>0$. Then $G_{\rm sub}$ is below the exact-recovery threshold, but it still gives a leave-one-out preliminary labeling with only $n^{\epsilon+o(1)}$ errors. Using this preliminary labeling, compute for each vertex a local likelihood score from a reserved part of its subsampled incident edges, and screen only vertices whose score has small margin. Large-deviation bounds show that the screened set has size at most $n^{0.77}$ w.h.p., while every unscreened vertex is already correctly labeled. Query each screened vertex once and classify it using fresh oracle observations against a disjoint reference set. The oracle likelihood test has error probability $n^{-0.7875+o(1)}$ per screened vertex, so a union bound over $n^{0.77}$ screened vertices succeeds. Thus $k=\lceil n^{0.78}\rceil=o(n)$ queries suffice for exact recovery w.h.p.
\end{proof}

Combining the uniform-querying lower bound with the adaptive construction yields the main separation of this section: with the same sublinear oracle budget, data-independent querying cannot improve on the subsampled graph, whereas adaptive querying achieves exact recovery.

\begin{restatable}{theorem}{adaptivemain}\label{thm:adaptive-main}
In \Cref{problem2}, there exist SBM parameters $\alpha,\beta>0$, a subsampling rate $s\in(0,1)$, an oracle reliability $w\in(0,1]$ and a query budget $k=o(n)$ such that simultaneously: 

\squishlist
    \item[1.] Every data-independent uniform querying strategy fails to achieve exact recovery w.h.p. 
    \item[2.] There exists a two-stage adaptive querying strategy that uses $G_{\rm sub}$ to target queries and achieves exact recovery with high probability.
\squishend
\end{restatable}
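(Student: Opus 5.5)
The plan is to combine \Cref{lemma:uniform_sublinear_fails} and \Cref{lemma:adaptive_sublinear_succeed} at one common parameter choice. The only subtlety is that the existential in \Cref{lemma:adaptive_sublinear_succeed} must land in the regime $s\Deltasq<2$, where \Cref{lemma:uniform_sublinear_fails} applies. I will therefore fix the concrete parameters from the proof sketch of \Cref{lemma:adaptive_sublinear_succeed}: $\alpha=16$, $\beta=1$ (so $\Deltasq=9$), $w=1/2$, $s=2(1-\epsilon)/9$ for a small constant $\epsilon>0$, and budget $k=\lceil n^{0.78}\rceil=o(n)$. With these choices $s\Deltasq=2(1-\epsilon)<2$, and both items refer to the same budget.

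For item 1, I will invoke \Cref{lemma:uniform_sublinear_fails} with $s\in(0,1)$, $w=1/2$ and $k=\lceil n^{0.78}\rceil$. Any data-independent balanced uniform strategy queries at most $k=o(n)$ vertices. Since $s\Deltasq<2$, the lemma gives failure of exact recovery w.h.p. I will check that its proof yields this uniformly over all data-independent query sets. This is the case because the set of queried vertices is fixed independently of $(G,\sigma^*,G_{\rm sub})$, so the ABH second-moment converse can be applied conditionally on that set. The converse is applied to the induced subgraph on the $n-o(n)$ unqueried vertices, which remains an SBM at rate $s$ and has $s\Deltasq<2$.

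For item 2, I will invoke \Cref{lemma:adaptive_sublinear_succeed} at the same parameters. Its strategy proceeds in four steps.
\begin{enumerate}
\item Compute a leave-one-out preliminary labeling from $G_{\rm sub}$ with $n^{\epsilon+o(1)}$ errors.
\item Screen the vertices whose local likelihood margin, measured on a reserved comparison set, is small. This screened set has size at most $n^{0.77}$ w.h.p.
\item Query each screened vertex once.
\item Classify each screened vertex by a likelihood test on fresh oracle neighbors against a disjoint reference set. The per-vertex error is $n^{-0.7875+o(1)}$, and a union bound over the screened set gives overall error $n^{0.77-0.7875+o(1)}=o(1)$.
\end{enumerate}
The budget used is at most $n^{0.77}\le k$. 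This argument mirrors \Cref{thm:adaptive_repair_screened}, with $G_{\rm sub}$ replacing the first-stage oracle transcript. A final union bound over the failure events of the two items shows that both hold simultaneously.

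The main obstacle is ensuring that the constants in \Cref{lemma:adaptive_sublinear_succeed} survive the requirement $s\Deltasq<2$. The screening exponent $0.77$ must stay below the oracle test exponent $0.7875$, and the latter must be valid at $w=1/2$ with a reference set of linear size. The screening exponent depends on $\epsilon$, so $\epsilon$ has to be small enough that the large-deviation bound still gives the $n^{0.77}$ screened-set size. To confirm this, I would first recompute the Chernoff exponent for the margin event under subsampling rate $s\approx 2/9$, and then the oracle exponent $\tfrac{\zeta}{2}w\Deltasq$ for the chosen $\zeta$ (here $w=1/2$ and a single query per screened vertex), checking that the gap $0.7875-0.77$ is positive with room for the $o(1)$ terms.
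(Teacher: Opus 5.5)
Your proposal is correct and is essentially the paper's proof: both instantiate \Cref{lemma:uniform_sublinear_fails} and \Cref{lemma:adaptive_sublinear_succeed} at one common parameter tuple. You are more explicit than the paper in checking that the adaptive construction's choice $\alpha=16$, $\beta=1$, $w=1/2$, $s=2(1-\epsilon)/9$, $k=\lceil n^{0.78}\rceil$ satisfies $s\Deltasq=2(1-\epsilon)<2$, which the paper leaves implicit.

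One small caution, which does not affect your argument since you only invoke the lemma. Its converse is not ABH applied to the induced subgraph on unqueried vertices; that would discard the $G_{\rm sub}$ and oracle observations to queried vertices, and the sublabeling there is not exactly balanced. Instead, the lemma keeps those terms in the one-vertex log-likelihood ratio and shows they contribute only $o(\log n)$.
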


\section{Conclusion} \label{sec:conclusion}
We studied exact community recovery in the two-community SBM under budgeted noisy-neighborhood-oracle access. Our models make data acquisition part of the inference problem: the learner must decide not only how to recover communities from partial observations, but also where to acquire those observations.
In \Cref{problem1}, balanced uniform querying gives a sharp non-adaptive benchmark, but a two-stage adaptive strategy can succeed with $n+o(n)$ queries in regimes where uniform querying requires $mn$ queries for some $m>1$. 
In \Cref{problem2}, a subsampled graph can identify locally ambiguous vertices, allowing adaptive querying to focus a sublinear budget and yielding a genuine adaptivity gap. 
Thus, under the same observation budget, different acquisition rules can lead to fundamentally different exact-recovery behavior.

\xhdr{Scope and Limitations}
Our results are asymptotic and apply to the canonical balanced two-community logarithmic-degree SBM with independent one-sided oracle noise. In this context, we isolate the role of adaptive data acquisition and prove concrete information-theoretic gains over uniform querying. However, characterizing these adaptive separations to sharp optimal constants and extending it to a full adaptive phase diagram across all parameter regimes is left open.

\xhdr{Future Work}
By treating data acquisition as part of the inference problem, we open several directions for future work, including design of general adaptive querying frameworks that match or outperform uniform querying across broader SBM regimes, and closing the remaining exact-threshold gaps.\looseness=-1

\xhdr{Societal Impact}
This work is theoretical and does not involve real data, human subjects, or deployed systems. Its potential positive impact is to clarify when limited local measurements can reduce the cost of network recovery. At the same time, improved community recovery from partial observations could be misused in privacy-sensitive network settings. Any practical deployment should therefore account for consent, privacy constraints, and the sensitivity of inferred communities.

\section*{Funding and Conflict of Interest Disclosure}
Suhas Thejaswi acknowledges support from the Technology Industries of Finland Centennial Foundation through a grant awarded to Aalto University. The authors declare that the funding source does not create any conflict of interest in relation to this work.

\bibliographystyle{plainnat}
\bibliography{refs}

\newpage
\appendix

\section{Omitted proofs from \Cref{sec:oracle}} \label{app:sec:oracle}

For the sharp homogeneous-SBM benchmark, we impose the divisibility convention $k/n\in\mathbb N_0$. Writing $m:=k/n$, the uniform policy queries every vertex exactly $m$ times. Hence an existing edge $\{i,j\}$ is missed by all oracle queries at both endpoints with probability $(1-w)^{2m}$, and is therefore revealed with probability:
$
\theta_{\rm unif}(k,w):=1-(1-w)^{2m}
=1-(1-w)^{2k/n}.
$
Consequently, the observed graph is exactly an SBM with effective parameters
\[
\widetilde \alpha=\alpha\theta_{\rm unif}(k,w),
\qquad
\widetilde \beta=\beta\theta_{\rm unif}(k,w).
\]
Applying the Abbe--Bandeira--Hall threshold gives the following benchmark.

\proporacleuniform*

\begin{proof}
Under the uniform policy, each endpoint of an existing edge is queried $m=k/n$
times. Thus the edge is revealed with probability
$\theta_{\rm unif}(k,w)=1-(1-w)^{2m}$, independently across edges conditional on
the labels. Hence the observed graph is an SBM with parameters
\[
\widetilde p=\alpha\theta_{\rm unif}(k,w)\frac{\log n}{n},
\qquad
\widetilde q=\beta\theta_{\rm unif}(k,w)\frac{\log n}{n}.
\]
The Abbe--Bandeira--Hall threshold applied to this attenuated SBM gives
\[
(\sqrt{\widetilde\alpha}-\sqrt{\widetilde\beta})^2>2,
\]
which is equivalent to
\[
\theta_{\rm unif}(k,w)(\sqrt\alpha-\sqrt\beta)^2>2.
\]
\end{proof}

\cororacleuniformkmin*
\begin{proof}
Under the uniform policy, each endpoint of an existing edge is queried $m=k/n$ times. Thus the edge is revealed with probability
$\theta_{\rm unif}(k,w)=1-(1-w)^{2m}$, 
independently across edges conditional on the labels. Hence the observed graph is an SBM with parameters
\[
\widetilde p=\alpha\theta_{\rm unif}(k,w)\frac{\log n}{n},
\qquad
\widetilde q=\beta\theta_{\rm unif}(k,w)\frac{\log n}{n}.
\]
The Abbe--Bandeira--Hall threshold applied to this attenuated SBM gives
\[
(\sqrt{\widetilde\alpha}-\sqrt{\widetilde\beta})^2>2,
\]
which is equivalent to
\[
\theta_{\rm unif}(k,w)(\sqrt\alpha-\sqrt\beta)^2>2.
\]
\end{proof}

If \(w=1\), then \(\theta_{\mathrm{unif}}(k,1)=1\) for every \(m=k/n\ge 1\). Thus, under the divisibility convention, one query per vertex suffices exactly when
\[
(\sqrt{\alpha}-\sqrt{\beta})^2>2,
\]
and in this case \(k_{\min}=n\). If \(w=0\), then \(\theta_{\mathrm{unif}}(k,0)=0\), so no edge is ever revealed and exact recovery is impossible.

\begin{remark}[Non-divisible budgets]
If $k/n\notin\mathbb N$, let $r=\lfloor k/n\rfloor$. A balanced allocation queries each vertex either $r$ or $r+1$ times. Conditional on this allocation, the observed graph is generally not a homogeneous SBM, because $\rho_{ij}=1-(1-w)^{R_i+R_j}$ depends on the endpoint query counts.

The homogeneous theorem above nevertheless gives a useful bracket. The balanced allocation is at least as informative as the homogeneous allocation with $r$ queries per vertex, since the learner may ignore any extra queries. Conversely, it is no more informative than the homogeneous allocation with $r+1$ queries per vertex. Therefore, with
$\theta_\ell:=1-(1-w)^{2\ell}$,
we have
\[
\theta_r(\sqrt{\alpha}-\sqrt{\beta})^2>2
\quad\Longrightarrow\quad
\text{balanced uniform querying succeeds},
\]
whereas
\[
\theta_{r+1}(\sqrt{\alpha}-\sqrt{\beta})^2<2
\quad\Longrightarrow\quad
\text{balanced uniform querying fails}.
\]
Thus the non-divisible case is bracketed by the two neighboring homogeneous
allocations. We discuss the genuinely sublinear-budget regime separately in
\Cref{lemma:uniform_sublinear_fails}.
\end{remark}

\adaptiverepairscreened*

\begin{proof}
On the event~\eqref{eq:screening_condition}--\eqref{eq:reference_condition}, fix the common global sign $s$.  For each $v\in \mathcal A$, the adaptive stage queries $v$ exactly $L$ additional times.  It ignores all responses from vertices outside the reference set $B_v$ and classifies $v$ using only the responses from $v$ to $B_v$ and the labels $\widehat\sigma^{(-v)}_{B_v}$.  By the split measurability condition, the first-stage data that determine $\{v\in \mathcal A\}$ do not use observations on pairs $\{u,v\}$ with $u\in B_v$.  Hence, after conditioning on the first-stage sigma-field that defines $\{v\in \mathcal A\}$, $B_v$, $C_v$, and $\widehat\sigma^{(-v)}_{B_v\cup C_v}$, the edge variables $\{G^*_{uv}:u\in B_v\}$ are still independent conditional on the labels, and the new oracle coins are fresh.

Let
\[
a_L:=1-(1-w)^L.
\]
For $u\in B_v$, define $Y_u=1$ if at least one of the $L$ new queries to $v$ returns $u$, and $Y_u=0$ otherwise.  Conditional on the labels, the variables $\{Y_u:u\in B_v\}$ are independent Bernoulli random variables with parameter $a_Lp$ when $\sigma^*_u=\sigma^*_v$ and $a_Lq$ when $\sigma^*_u\ne\sigma^*_v$.  By \eqref{eq:reference_condition}, $B_v$ contains $\zeta n/2+o(n)$ vertices of each sign and therefore $|B_v|=\zeta n+o(n)$.

We now spell out how the second stage assigns a label to a fixed vertex $v\in\mathcal A$.  The algorithm treats the labels in $B_v$ as a reference set and uses only the fresh observations $Y_u$ from the additional queries to $v$.  The only quantity being decided in this calculation is whether $v$ has the same label as the $+1$ side of the reference set or the $-1$ side.  First suppose that the reference labels are exact, after applying the common global sign.  For a candidate value $\tau\in\{\pm1\}$ of $s\sigma^*_v$, the conditional log-likelihood of the observations $Y=(Y_u)_{u\in B_v}$ is, up to constants independent of $\tau$,
\[
\begin{aligned}
\ell_v(\tau;Y)
=\sum_{u\in B_v}
&\mathbf 1\{\widehat\sigma^{(-v)}_u=\tau\}
\left[Y_u\log(a_Lp)+(1-Y_u)\log(1-a_Lp)\right]\\
&+\mathbf 1\{\widehat\sigma^{(-v)}_u\ne\tau\}
\left[Y_u\log(a_Lq)+(1-Y_u)\log(1-a_Lq)\right].
\end{aligned}
\]
Thus the log-likelihood ratio in favor of $s\sigma^*_v=+1$ over $s\sigma^*_v=-1$ is
\begin{equation}
\label{eq:second_stage_llr_oracle_only}
\Lambda_v^{(2)}
=
\sum_{u\in B_v}\widehat\sigma^{(-v)}_u
\left[
Y_u\log\frac{p}{q}
+(1-Y_u)\log\frac{1-a_Lp}{1-a_Lq}
\right].
\end{equation}
The second-stage decision rule is: set $\widehat\sigma^{(1)}_v=+1$ if $\Lambda_v^{(2)}\ge0$ and $\widehat\sigma^{(1)}_v=-1$ otherwise, with the final labels interpreted up to the same common global sign.  In words, the algorithm compares whether the fresh neighbors revealed by querying $v$ look more like connections to vertices labeled $+1$ or to vertices labeled $-1$ in the reference set.

The corresponding Bhattacharyya coefficient is also explicit.  For one reference vertex, the two hypotheses swap the Bernoulli parameters $a_Lp$ and $a_Lq$, so the one-coordinate coefficient is
\[
b_L
:=
\sqrt{a_Lp\,a_Lq}+\sqrt{(1-a_Lp)(1-a_Lq)}.
\]
Since the coordinates are conditionally independent and $|B_v|=\zeta n+o(n)$, the product Bhattacharyya coefficient is
\[
\operatorname{BC}_v=b_L^{|B_v|}=b_L^{\zeta n+o(n)}.
\]
The likelihood-ratio test has error probability at most this coefficient.

We next return to the actual reference labels. Let $E_v=\{u\in B_v:\widehat\sigma^{(-v)}_u\ne s\sigma^*_u\}$.  By \eqref{eq:reference_condition}, $|E_v|=o(n/\log n)$.  Relative to the exact-reference likelihood ratio, only the terms indexed by $E_v$ are changed.  The non-edge contribution of these terms is at most
\[
|E_v|\left|\log\frac{1-a_Lp}{1-a_Lq}\right|=o(1),
\]
while the number of newly revealed edges from $v$ to $E_v$ has mean $O(|E_v|\log n/n)=o(1)$.  Hence the mislabeled reference vertices perturb \eqref{eq:second_stage_llr_oracle_only} by $o(\log n)$ with probability $1-o(1)$, uniformly over the screened vertices.  Therefore the same Bhattacharyya exponent remains valid up to an $o(\log n)$ error. Using $p=\alpha\log n/n$ and $q=\beta\log n/n$,
\[
\begin{aligned}
b_L
&=\sqrt{a_Lp\,a_Lq}+\sqrt{(1-a_Lp)(1-a_Lq)}\\
&=a_L\sqrt{pq}+1-\frac{a_L(p+q)}{2}+O(p^2+q^2)\\
&=1-\frac{a_L}{2}(\sqrt p-\sqrt q)^2+O\left(\frac{\log^2 n}{n^2}\right)\\
&=1-\frac{a_L\Deltasq}{2}\frac{\log n}{n}
+O\left(\frac{\log^2 n}{n^2}\right).
\end{aligned}
\]
Consequently,
\[
\operatorname{BC}_v
=\left(1-\frac{a_L\Deltasq}{2}\frac{\log n}{n}
+O\left(\frac{\log^2 n}{n^2}\right)\right)^{\zeta n+o(n)}
=n^{-\zeta a_L\Deltasq/2+o(1)}.
\]
Therefore the conditional probability that the likelihood-ratio test misclassifies $v$ is at
most
\[
n^{-\zeta a_L\Deltasq/2+o(1)}.
\]
By~\eqref{eq:L_condition_oracle_only}, choose $\eta>0$ such that
$\zeta a_L\Deltasq/2\ge \gamma+2\eta$.  A union bound over the at most $n^\gamma$ vertices in $\mathcal A$
gives
\[
\Pr(\text{some }v\in \mathcal A\text{ is misclassified and the screening event holds})
\le n^\gamma n^{-\zeta a_L\Deltasq/2+o(1)}
\le n^{-\eta+o(1)}=o(1).
\]
All vertices outside $\mathcal A$ are correct on~\eqref{eq:screening_condition} by assumption, so the combined labeling is correct up to the common global sign $s$ w.h.p.  The query count is
$k_0+L|\mathcal A|=x_0n+O(n^\gamma)=x_0n+o(n)$ on the same event.
\end{proof}

Before we can prove \Cref{lem:concrete_first_stage_screening}, we need the following result.
\begin{restatable}[Uniform leave-one-out rough initialization]{lemma}{roughinitialization}
\label{lem:rough_initialization}
Consider the exactly balanced two-block SBM on an even number $n$ of vertices, with edge
probabilities $a\log n/n$ and $b\log n/n$.  If $\Dplus{a}{b}>3/4$, then the balanced
leave-one-out MLEs satisfy the following with probability $1-o(1)$: there is a common global
sign $s\in\{\pm1\}$ such that, simultaneously for every $v\in[n]$,
\[
\bigl|\{u\ne v:\widehat\sigma^{(-v)}_u\ne s\sigma^*_u\}\bigr|
\le n^{1/4+o(1)}.
\]
In particular, the number of leave-one-out reference-label errors is $o(n/\log n)$ uniformly
over $v$.
\end{restatable}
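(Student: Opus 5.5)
We would prove \Cref{lem:rough_initialization} by the standard MLE-counting argument (as in Abbe--Bandeira--Hall), run once for each leave-one-out graph $G^{(-v)}$ and combined by a union bound over $v$.

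\textbf{Step 1: reduction to the graph $G^{(-v)}$.} Fix $v$. The graph $G^{(-v)}$ restricted to $[n]\setminus\{v\}$ is an SBM on $n-1$ vertices with blocks of sizes $n/2$ and $n/2-1$. Its edges are independent of the edges incident to $v$. The MLE over $\mathcal B_{-v}$ maximizes the number of within-block edges among balanced labelings. For a balanced candidate $\tau$ that differs from $s\sigma^*$ on $r$ vertices of each block (so $2r$ errors in total, with $1\le r\le n/4$), the likelihood comparison reduces to
\[
\#\{\text{cross edges made internal}\}\ \ge\ \#\{\text{internal edges made cross}\}.
\]
The cross-to-internal edge set has size $\approx r(n-2r)$ with probability $q$. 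The internal-to-cross edge set has size of the same order with probability $p$.

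\textbf{Step 2: Chernoff bound and counting.} For small $r$ (say $r\le n/\log n$), we use the degree-based bound. The relevant difference of binomials has Chernoff exponent
\[
r\,\frac{n}{2}\cdot\frac{\log n}{n}\,(\sqrt a-\sqrt b)^2\,(1-o(1)) \;=\; 2r\,\Dplus{a}{b}\log n\,(1-o(1)),
\]
up to corrections of order $r^2\log n/n$. There are at most $\binom{n/2}{r}^2\le n^{2r}$ such $\tau$. Hence
\[
\Pr(\text{some $\tau$ with $2r$ errors beats the truth})\le \exp\!\bigl(2r\log n - 2r\Dplus{a}{b}\log n + o(r\log n)\bigr).
\]
This is not summable when $\Dplus{a}{b}<1$, so we use the standard refinement. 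The $\tau$ must swap vertices that individually have non-positive margin, which is the ABH "bad vertex" argument. The number of vertices with margin at most $\epsilon\log n$ is $n^{1-\Dplus{a}{b}+o(1)}\le n^{1/4-\delta}$ with probability $1-o(1)$, using $\Dplus{a}{b}>3/4$. For $\tau$ whose error set $S$ has $|S|=2r\ge n^{1/4+\epsilon}$, we bound the total margin of $S$ by a sum over $S$ with a correction of order $r^2\log n/n$ for edges inside $S$. The sum over $S$ is dominated by the individual margins, and at most $n^{1/4-\delta}$ of them are small. All others are at least $\epsilon\log n$. This makes the overall change negative whenever $r^2\log n/n\ll r\log n$, that is, for $r\le cn$.

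For $r$ of linear size, we use the known almost-exact recovery of the MLE in the $\log n$ regime. This follows from a union bound: there are $e^{O(n)}$ candidates, each failing with probability $e^{-\Omega(r\log n)}$.

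The conclusion of this step is that every balanced MLE error set has size at most $n^{1/4+o(1)}$.

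\textbf{Step 3: uniformity over $v$ and a common sign.} \emph{This is the main obstacle.} A per-$v$ failure probability of $o(1)$ does not survive a union bound over $n$ vertices. We therefore prove the Step 2 events for the full graph $G$ and show they transfer to each $G^{(-v)}$.

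The key point is that deleting the at most $O(\log n)$ edges at $v$ changes the margin of each other vertex by at most $1$ (since $\max$ degree is $O(\log n)$ whp and $v$ contributes at most one edge to each vertex). So the set of vertices with margin $\le\epsilon\log n/2$ in $G^{(-v)}$ is contained in the set with margin $\le\epsilon\log n$ in $G$. The large-$r$ counting bounds have failure probability $n^{-\omega(1)}$ and are stable under removing $O(\log n)$ edges. Hence one global high-probability event implies the Step 2 conclusion simultaneously for all $v$.

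For the common sign, each error set has size $n^{1/4+o(1)}=o(n^{3/4})$. Therefore the anchor $D_n$, of size $\lceil n^{3/4}\rceil$ and with label sum zero up to fluctuation under the uniform partition, forces every $\widehat\sigma^{(-v)}$ to take the same orientation relative to $\sigma^*$. The one caveat is that $D_n$ may be unbalanced under $\sigma^*$. However, $|\sum_{D_n}\sigma^*|$ is of order $n^{3/8}\gg n^{1/4+o(1)}$ whp, so its sign is nonzero and fixes $s$ uniformly.

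Finally, $n^{1/4+o(1)}=o(n/\log n)$, which gives the last claim.
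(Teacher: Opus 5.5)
Your proof is correct in outline, but Steps~2--3 follow a different route from the paper. The paper never uses the ABH bad-vertex refinement. For a fixed balanced alternative with disagreement set of size $m$, it applies the Bhattacharyya bound over the $m(N-m)$ status-changing pairs, obtaining $\exp(-(D_+(a,b)-o(1))\,m(N-m)\log n/n)$. It then union-bounds using the \emph{sharp} entropy estimate $\binom{N}{m}\le\exp(m\log(eN/m))$. For $m\ge n^{1/4}$ one has $\log(eN/m)\le(3/4+o(1))\log n$, so each level contributes $\exp(-(D_+(a,b)-3/4-o(1))\,m\log n)$. This is exactly where the hypothesis $D_+(a,b)>3/4$ enters.

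The paper's route therefore gives a per-$v$ failure probability of $o(n^{-2})$. Uniformity over $v$ is then a plain union bound, and the ``main obstacle'' of your Step~3 never arises. Your claim that the direct union bound is not summable when $D_+<1$ comes from the crude count $\binom{n/2}{r}^2\le n^{2r}$, which discards the $-\log r$ entropy saving.

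Your alternative is also valid. You bound the number of vertices with margin $\le\epsilon\log n$ by Markov, then transfer one global event on $G$ to every $G^{(-v)}$, using that margins move by at most one and $e_{G^{(-v)}}(S)\le e_G(S)$. This has the merit of localizing the errors to near-low-margin vertices through a single event. It does, however, need ingredients the paper avoids, which you should state explicitly:
\begin{itemize}
\item The exact criterion that $\tau$ has likelihood at least that of the truth iff $\sum_{i\in S}\mu_i\le 2e_{\mathrm{same}}(S)-2e_{\mathrm{diff}}(S)$. Your heuristic that swapped vertices must individually have non-positive margin is false as stated: a two-vertex swap with margins $1$ and $-3$ can win.
\item A max-degree lower bound $\mu_i\ge -C\log n$, to absorb the possibly very negative margins of the few low-margin vertices in $S$.
\item A bound $e(S)\le\eta|S|\log n$ holding \emph{simultaneously} for all $|S|\le cn$, via a first-moment bound over sets with $c$ small in terms of $\epsilon$. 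Your ``correction of order $r^2\log n/n$'' is only the typical value for a fixed $S$.
\end{itemize}

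There is also a factor-two slip in your displayed exponent. With $\approx rn$ status-changing pairs on each side, the exponent is $rn\cdot\frac{\log n}{n}(\sqrt a-\sqrt b)^2=2rD_+(a,b)\log n$, whereas the expression you wrote, $r\frac n2\cdot\frac{\log n}{n}(\sqrt a-\sqrt b)^2$, equals only $rD_+(a,b)\log n$.

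Your anchor argument for the common sign matches the paper's.
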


\begin{proof}
We use a union-bound proof following in the style of ABH, keeping track of the balanced constraint. Fix $v$ and condition on the true labels.  Write $N=n-1$.  The restriction of the truth to $[n]\setminus\{v\}$ belongs to $\mathcal B_{-v}$, since the full truth is exactly balanced. For $\tau\in\mathcal B_{-v}$, let $L_{-v}(\tau)$ denote the SBM likelihood of the graph on $[n]\setminus\{v\}$ under the labeling $\tau$; that is, the product likelihood over pairs $\{i,j\}\subseteq[n]\setminus\{v\}$ with edge probability $a\log n/n$ when $\tau_i=\tau_j$ and $b\log n/n$ otherwise.

For $\tau\in\mathcal B_{-v}$, choose its global sign so that
\[
m=m(\tau):=\bigl|\{u\ne v:\tau_u\ne\sigma^*_u\}\bigr|\le N/2.
\]
Let $S$ be this disagreement set.  A pair $\{i,j\}\subseteq[n]\setminus\{v\}$ has different within/between-community status under $\tau$ and under $\sigma^*$ exactly when one endpoint lies in $S$ and the other lies outside $S$.  Thus the likelihood ratio between $\tau$ and the truth only involves
\[
M(S)=m(N-m)
\]
pairs; all other pair contributions cancel.  For each such pair, the two hypotheses swap the Bernoulli parameters $a\log n/n$ and $b\log n/n$.  Hence, by the Bhattacharyya bound applied to the likelihood-ratio test between the truth and this fixed alternative,
\[
\Pr\{L_{-v}(\tau)\ge L_{-v}(\sigma^*_{-v})\}
\le
\left(
\sqrt{pq}+\sqrt{(1-p)(1-q)}
\right)^{M(S)},
\qquad
p=\frac{a\log n}{n},\quad q=\frac{b\log n}{n}.
\]
Since
\[
\sqrt{pq}+\sqrt{(1-p)(1-q)}
=1-\Dplus{a}{b}\frac{\log n}{n}+O\left(\frac{\log^2 n}{n^2}\right),
\]
we obtain
\begin{equation}
\label{eq:balanced_mle_alt_bound}
\Pr\{\text{$\tau$ has likelihood at least that of the truth}\}
\le
\exp\left(-(\Dplus{a}{b}-o(1))\frac{m(N-m)}{n}\log n\right).
\end{equation}

It remains to union bound over balanced alternatives.  The number of alternatives with Hamming distance $m$ is at most $\binom{N}{m}\le\exp(m\log(eN/m))$.  For $n^{1/4}\le m\le n/\log n$, we have $(N-m)/n=1-o(1)$ and $\log(eN/m)\le(3/4+o(1))\log n$, so the total contribution from this range is at most
\[
\sum_{m=n^{1/4}}^{n/\log n}
\exp\left(-\bigl(\Dplus{a}{b}-\tfrac34-o(1)\bigr)m\log n\right)=o(n^{-2}).
\]
For $n/\log n<m\le N/2$, we have $(N-m)/n\ge1/2-o(1)$ and $\log(eN/m)\le\log(e\log n)=O(\log\log n)$.  Therefore the likelihood exponent, which is of order $m\log n$, dominates the entropy term $m\log(eN/m)$, and this range contributes
\[
\sum_{m>n/\log n}
\exp\left(-\Omega(m\log n)\right)=o(n^{-2}).
\]
Thus, for this fixed $v$, with probability $1-o(n^{-2})$, no balanced labeling at distance at least $n^{1/4}$ has likelihood at least that of the truth.  A union bound over the $n$ choices of $v$ gives, with probability $1-o(1)$, that every leave-one-out maximizer is within $n^{1/4}$ of {either $\sigma^{*(-v)}$ or $-\sigma^{*(-v)}$}.

It remains only to check that the deterministic anchor convention chooses the same sign for all $v$.  Since $|D_n|=n^{3/4}+O(1)$ and the true labels are sampled from the exactly balanced model, hypergeometric anti-concentration gives
\[
\left|\sum_{u\in D_n}\sigma^*_u\right|>n^{1/3}
\]
with probability $1-o(1)$.  Removing one vertex changes this anchor sum by at most one, and the leave-one-out error bound above contributes at most $n^{1/4+o(1)}=o(n^{1/3})$ errors on $D_n$.  Hence the anchor orientation selects the same global sign for every leave-one-out maximizer.  The claimed simultaneous $n^{1/4+o(1)}$ bound follows.
\end{proof}

\concretefirststagescreening*

\begin{proof}
Let $G^{(0)}$ denote the first-stage observed graph, and let $A^{(0)}$ be its
adjacency matrix. 
For notational convenience in the proof, write
\[
    a:=16\theta_0,\qquad b:=\theta_0.
\]
We first identify the distribution of the first-stage graph.  The underlying graph is
\[
G\sim {\rm SBM}\left(n,16\frac{\log n}{n},\frac{\log n}{n}\right).
\]
Every vertex is queried exactly once.  Thus an existing edge \(\{u,v\}\) is missed by the first
stage if it is missed when querying \(u\) and also missed when querying \(v\).  These two oracle
coins are independent, so the miss probability is
\[
(1-w_0)^2=1-\theta_0.
\]
Therefore an existing edge is revealed with probability \(\theta_0\).  Since the oracle never
creates false edges, the first-stage observed graph is exactly an SBM with edge probabilities
\[
a\frac{\log n}{n}
=
16\theta_0\frac{\log n}{n},
\qquad
b\frac{\log n}{n}
=
\theta_0\frac{\log n}{n}.
\]
Equivalently,
\[
A^{(0)}
\sim
{\rm SBM}\left(n,a\frac{\log n}{n},b\frac{\log n}{n}\right).
\]

The useful point is that this graph is not strong enough for exact recovery, but it is strong
enough for rough recovery.  Indeed,
\[
D_+(a,b)
=
\frac12(\sqrt a-\sqrt b)^2
=
\frac12\theta_0(\sqrt{16}-\sqrt1)^2
=
\frac92(1-2^{-1/3})
>0.92>\frac34 .
\]
Hence \Cref{lem:rough_initialization} applies to the first-stage graph.  It gives balanced
leave-one-out rough labelings \(\widehat\sigma^{(-v)}\), one for each \(v\in[n]\), such that with
probability \(1-o(1)\), there is a common global sign \(s\in\{\pm1\}\) satisfying
\[
\max_{v\in[n]}
\left|
\left\{
u\neq v:
\widehat\sigma^{(-v)}_u\neq s\sigma^*_u
\right\}
\right|
\le n^{1/4+o(1)}
=o(n/\log n).
\]
Here \(\widehat\sigma^{(-v)}\) is computed after deleting all first-stage observations incident to
\(v\).  This leave-one-out construction is important: it makes the reference labels
\(\widehat\sigma^{(-v)}\) independent of the first-stage observations from \(v\), conditional on
the true labels.

For each \(v\), split the other vertices into two deterministic pieces.  List
\([n]\setminus\{v\}\) in increasing order.  Let \(B_v\) be the first
\(\lfloor(n-1)/2\rfloor\) vertices in this list, and let
\[
C_v:=([n]\setminus\{v\})\setminus B_v .
\]
The set \(C_v\) will be used only to decide whether \(v\) is ambiguous.  The set \(B_v\) will be
reserved for the second-stage likelihood test.  This split prevents us from using the same
incident observations both to screen \(v\) and later to classify \(v\).

Since the true labeling is sampled uniformly from the exactly balanced model, a deterministic
set of size \(n/2+O(1)\) contains \(n/4+o(n)\) vertices from each community, with high
probability.  By hypergeometric concentration and a union bound over all \(v\), we have
simultaneously for every \(v\),
\[
|B_v\cap\{u:\sigma^*_u=+1\}|=\frac n4+o(n),
\qquad
|B_v\cap\{u:\sigma^*_u=-1\}|=\frac n4+o(n),
\]
and also
\[
|C_v\cap\{u:\sigma^*_u=+1\}|=\frac n4+o(n),
\qquad
|C_v\cap\{u:\sigma^*_u=-1\}|=\frac n4+o(n).
\]

Now define the leave-one-out screening score
\[
\Lambda_v^{\rm loo}
:=
\sum_{u\in C_v}\widehat\sigma^{(-v)}_u A^{(0)}_{uv}.
\]
This score has a simple interpretation.  It compares how many revealed first-stage edges from
\(v\) go to vertices currently labeled \(+1\) and how many go to vertices currently labeled
\(-1\).  If \(v\) is truly in the \(+1\) community, then it should have more revealed edges to
true \(+1\) vertices than to true \(-1\) vertices, so the score should usually be positive.  If
\(v\) is truly in the \(-1\) community, the score should usually be negative.  Vertices with
scores close to zero are the ambiguous vertices that we will send to the second stage.

Fix the common global sign \(s\) from the rough-initialization guarantee.  For readability,
assume first that \(s=+1\) and \(\sigma^*_v=+1\).  If the reference labels
\(\widehat\sigma^{(-v)}\) were exact on \(C_v\), then \(\Lambda_v^{\rm loo}\) would have the
same large-deviation behavior as
\[
X_v-Y_v,
\]
where
\[
X_v\sim {\rm Bin}\left(\frac n4+o(n),a\frac{\log n}{n}\right),
\qquad
Y_v\sim {\rm Bin}\left(\frac n4+o(n),b\frac{\log n}{n}\right).
\]
Here \(X_v\) counts revealed edges from \(v\) to same-community vertices in \(C_v\), while
\(Y_v\) counts revealed edges from \(v\) to opposite-community vertices in \(C_v\).

The reference labels are not perfectly exact, but the number of reference-label errors in
\(C_v\) is \(o(n/\log n)\), uniformly over \(v\).  Conditional on the leave-one-out graph, the
number of first-stage revealed edges from \(v\) to these mislabeled reference vertices has mean
\[
O\left(\frac{\log n}{n}\right)o(n/\log n)=o(1).
\]

In the next claim, we show that the contribution of mislabeled reference vertices in $C_v$ to the leave-one-out screening score $\Lambda_v^{\rm loo}$ is extremely small. More precisely, for every $\varepsilon>0$, the probability that these mislabeled vertices perturb $\Lambda_v^{\rm loo}$ by more than $\varepsilon \log n$ is $o(n^{-2})$, uniformly over $v$.
\begin{innerclaim}[Bounding the tail event]
By Chernoff upper-tail bound, for every fixed \(\varepsilon>0\),
\[
\Pr\left( \text{the mislabeled reference vertices change } \Lambda_v^{\rm loo} \text{by more than }\varepsilon\log n \right) =o(n^{-2}),
\]
uniformly in $v$.
\end{innerclaim}
\begin{proof}
Let $D_v \subseteq C_v$ be the set of mislabeled reference vertices in $C_v$, and let $M_v$ be the number of first-stage revealed edges from $v$ to vertices in $D_v$. By the earlier part of the proof, uniformly over $v$, 
\[
|D_v| = o(n/ \log n).
\]
Since each edge $\{u,v\}$ with $u \in D_v$ is revealed in the first stage graph with probability at most $c \log n /n$ for some constant $c>0$, it follows that $M_v$ is stochastically dominated by
\[
\text{Bin} \left(|D_v|,c \frac{\log n}{n} \right),
\]
Hence, again uniformly over $v$,
\[ 
\lambda_v := \EE[M_v] = O\left(|D_v| \frac{\log n}{n} \right) = o(1).
\]
Now let
\[
|\Delta_v|:=|\Lambda^{\rm loo}_v - (X_v -Y_v)|
\]
so that $\Delta_v$ is the perturbation of the screening score caused by the mislabeled reference vertices in $C_v$. Each revealed edge to a mislabeled reference vertex changes the score by at most a constant; in fact, under the present definition of $\Lambda^{\rm loo}_v$, each such edge changes the score by at most $2$. Therefore, for some absolute constant $C>0$,
\[
|\Delta_v| \leq C M_v.
\]
Consequently, for every fixed $\varepsilon>0$,
\[
\Pr(|\Delta_v| > \varepsilon \log n) \leq \Pr\left(M_v > \frac{\varepsilon}{C} \log n\right).
\]
Set $t_v = \frac{\epsilon}{C} \log n$ and applying Chernoff upper-tail bound for the binomial random variable gives
\[
\Pr(M_v \geq t_v) \leq \left( \frac{e\lambda_v}{t_v}\right)^{t_v}
\]
Since $\lambda_v = o(1)$ uniformly over $v$ and $t_v = \Theta(\log n)$, we obtain
\[
\left(\frac{e\lambda_v}{t_v}\right)^{t_v} = \exp\left(- \Theta(\log n \log \log n)\right) = n^{-\omega(1)}.
\]
In particular, this is $o(n^{-2})$ uniformly over $v$.
\end{proof}

A union bound over \(v\) shows that, simultaneously for all vertices, the
effect of mislabeled reference vertices is only \(o(\log n)\) with probability $1-o(1)$.  Thus the large-deviation exponents of the actual score \(\Lambda_v^{\rm loo}\) are the same as those of \(X_v-Y_v\), up
to \(o(1)\) in the exponent.

Let \(I_{a,b}(d)\) be the Cram{\'e}r rate function for \(X_v-Y_v\) at scale \(\log n\):
\[
I_{a,b}(d)
:=
\sup_{t\in\mathbb R}
\left\{
td-\frac a4(e^t-1)-\frac b4(e^{-t}-1)
\right\}.
\]
This rate function describes the probability that \(X_v-Y_v\) has an atypically small value:
\[
\Pr(X_v-Y_v\le d\log n)
=
n^{-I_{a,b}(d)+o(1)}
\]
in the lower-tail regime relevant here.  In particular,
\[
I_{a,b}(0) = \frac14(\sqrt a-\sqrt b)^2 = \frac14\theta_0(\sqrt{16}-\sqrt1)^2
= \frac94(1-2^{-1/3}) >0.46.
\]
Choose
\[
\delta:=0.32.
\]
A direct evaluation of the displayed rate function gives
\[
I_{a,b}(\delta)>0.13, \qquad I_{a,b}(-\delta)>1.02.
\]

Define the observable screened set
\[
\mathcal A := \left\{ v:|\Lambda_v^{\rm loo}|\le \delta\log n \right\}.
\]
For vertices outside the screen, define the preliminary label by
\[
\widehat\sigma^{(0)}_v := {\rm sign}(\Lambda_v^{\rm loo}), \qquad v\notin\mathcal A.
\]
Thus \(\mathcal A\) contains exactly those vertices whose first-stage score has too small a
margin to trust.

We now bound the size of \(\mathcal A\).  Suppose \(\sigma^*_v=+1\).  Since the typical value of
\(\Lambda_v^{\rm loo}\) is positive, the event \(v\in\mathcal A\) is a lower-tail event:
\[
\Pr(v\in\mathcal A) = \Pr(|\Lambda_v^{\rm loo}|\le \delta\log n)
\le n^{-I_{a,b}(\delta)+o(1)} \le n^{-0.13+o(1)}.
\]
The same estimate holds when \(\sigma^*_v=-1\), by symmetry.  Therefore
\[
\mathbb E|\mathcal A| \le n^{0.87+o(1)}.
\]
By Markov's inequality,
\[
|\mathcal A|\le n^{0.88}
\]
with high probability.

Next we show that every vertex outside the screen is labeled correctly.  If \(\sigma^*_v=+1\),
a wrong sign outside the screen requires
\[
\Lambda_v^{\rm loo}<-\delta\log n.
\]
The corresponding large-deviation bound gives
\[
\Pr(\Lambda_v^{\rm loo}<-\delta\log n) \le n^{-I_{a,b}(-\delta)+o(1)}
\le n^{-1.02+o(1)}.
\]
Equivalently, for both possible true labels,
\[
\Pr(\sigma^*_v\Lambda_v^{\rm loo}<-\delta\log n) \le n^{-1.02+o(1)}.
\]
A union bound over all vertices gives
\[
\Pr\left( \exists v\notin\mathcal A: \widehat\sigma^{(0)}_v\neq s\sigma^*_v \right)
= o(1).
\]
Thus, with high probability, every vertex outside \(\mathcal A\) is correctly labeled, up to the same global sign \(s\).

It remains to check the reference condition for vertices inside the screen.  For each \(v\in\mathcal A\), we use \(B_v\) as the second-stage reference set and use \(\widehat\sigma^{(-v)}\) restricted to \(B_v\) as the reference labeling. The rough leave-one-out guarantee gives
\[
\left| \left\{ u\in B_v: \widehat\sigma^{(-v)}_u\neq s\sigma^*_u \right\} \right|
= o(n/\log n)
\]
simultaneously for every \(v\).  The hypergeometric concentration estimates above give
\[
|B_v\cap\{u:\sigma^*_u=+1\}|=\frac n4+o(n), \qquad |B_v\cap\{u:\sigma^*_u=-1\}|=\frac n4+o(n).
\]
Therefore \eqref{eq:reference_condition} holds with reference parameter
\[
\zeta=\frac12.
\]

Finally, the measurability and independence requirements in the definition of screened initialization hold by construction.  The event \(\{v\in\mathcal A\}\) uses first-stage observations incident to \(v\) only through vertices in \(C_v\).  The second-stage likelihood test will use fresh oracle queries from \(v\) to vertices in \(B_v\).  Since \(B_v\cap C_v=\varnothing\), screening does not condition on the first-stage observations from \(v\) to the reference set used later for refinement.  Also, the complement of \(B_v\cup\{v\}\) has linear size, so the exact-balance constraint does not determine \(\sigma^*_v\) from the labels on \(B_v\) alone.

Combining the preceding estimates, with probability \(1-o(1)\),
\[
|\mathcal A|\le n^{0.88}, \qquad \widehat\sigma^{(0)}_u=s\sigma^*_u
\quad\text{for all }u\notin\mathcal A,
\]
and for every \(v\in\mathcal A\), the leave-one-out reference set \(B_v\) satisfies \eqref{eq:reference_condition} with \(\zeta=1/2\).  Hence the first-stage graph admits a \(0.88\)-screening initialization with reference parameter \(1/2\).
\end{proof}

\sublinearoracleimpossible*
\begin{proof}
We prove the contrapositive. Fix an arbitrary adaptive oracle-only strategy
with query budget \(k=o(n)\). We show that exact recovery fails with probability
tending to one.

The main intuition is the following. A query at a vertex can only reveal neighbors of that queried vertex. If \(k=o(n)\), then only \(o(n)\) vertices can be queried at all. Moreover, since the graph has logarithmic maximum degree with high probability, the total number of vertex appearances in all oracle responses is only \(o(n\log n)\). Hence most vertices are both unqueried and appear only rarely in the entire transcript. Swapping the labels of two such vertices, one from each community, changes the likelihood of the transcript by only a small polynomial factor. Since there are linearly many such balanced swaps, the true labeling cannot dominate the posterior.

Let \(T\) denote the full oracle transcript, including the algorithm's internal
randomness. Define
\[
    Q(T):=\{v:\text{vertex }v\text{ is queried at least once}\}.
\]
Since the strategy makes at most \(k\) queries,
\[
    |Q(T)|\le k=o(n).
\]

We next control how often vertices appear in oracle responses. For each
\(v\in[n]\), let \(A_v(T)\) be the number of times \(v\) is returned as a
neighbor in the oracle transcript, counted with multiplicity over repeated
queries. Since the oracle never returns non-neighbors, every response is a
subset of the true neighborhood of the queried vertex.

With high probability, the underlying SBM has maximum degree at most
\(C_{\deg}\log n\), for a constant \(C_{\deg}=C_{\deg}(\alpha,\beta)\). On this
event, every oracle response has size at most \(C_{\deg}\log n\), and hence
\[
    \sum_{v=1}^n A_v(T) \le C_{\deg} k\log n = o(n\log n).
\]
Fix a small constant \(\varepsilon>0\), to be chosen later, and define the set
of vertices that appear too often in the transcript:
\[
    H(T):=\{v:A_v(T)>\varepsilon\log n\}.
\]
Then
\[
    |H(T)| \le \frac{1}{\varepsilon\log n}\sum_{v=1}^n A_v(T) \le \frac{C_{\deg}}{\varepsilon}k = o(n).
\]

Now define the good set
\[
    G(T):=[n]\setminus\bigl(Q(T)\cup H(T)\bigr).
\]
Thus a vertex in \(G(T)\) is never queried and is returned as a neighbor at most
\(\varepsilon\log n\) times in the entire transcript. Since both \(Q(T)\) and
\(H(T)\) have size \(o(n)\), we have
\[
    |G(T)|=n-o(n)
\]
with high probability. Because the true labeling is exactly balanced, removing
only \(o(n)\) vertices leaves linearly many vertices in each community:
\[
    |G(T)\cap V^+|=\frac n2-o(n), \qquad |G(T)\cap V^-|=\frac n2-o(n).
\]

We now fix two good vertices of opposite labels,
\[
    i\in G(T)\cap V^+, \qquad j\in G(T)\cap V^-.
\]
Let \(\sigma^{(i,j)}\) be the balanced labeling obtained by swapping the labels
of \(i\) and \(j\):
\[
    \sigma^{(i,j)}_i=-1, \qquad \sigma^{(i,j)}_j=+1, \qquad \sigma^{(i,j)}_u
    =\sigma^\star_u \quad\text{for }u\notin\{i,j\}.
\]
This is the natural local obstruction in the balanced SBM: a one-vertex flip is
not an admissible balanced labeling, but swapping one \(+\) vertex with one
\(-\) vertex is.

Define the likelihood ratio
\[
    L_{ij}(T) := \frac{\mathbb P_{\sigma^{(i,j)}}(T)} {\mathbb P_{\sigma^\star}(T)}.
\]

We now lower-bound \(L_{ij}(T)\). The important point is that we do this
directly on the original adaptive transcript. We do not compare the original run
to a counterfactual run. The algorithm's future query choices may depend on
past oracle responses, but once a transcript \(T\) is fixed, the query choices
are deterministic functions of the past transcript and the internal randomness.
Thus the query choices themselves contribute no likelihood factor. Only the
probabilities of the observed oracle responses matter.

Under the swap \(\sigma^\star\mapsto\sigma^{(i,j)}\), all pairs not incident to
\(i\) or \(j\) have the same within/between-community status as before. The pair
\(\{i,j\}\) is also between-community under both labelings, so it contributes no
likelihood difference. Therefore the only possible likelihood differences come
from pairs \(\{u,i\}\) and \(\{u,j\}\) with \(u\notin\{i,j\}\).

Since \(i,j\in G(T)\), neither \(i\) nor \(j\) is ever queried. Thus the only
way the transcript can see \(i\) or \(j\) is if some queried vertex \(u\) returns
\(i\) or \(j\) as a neighbor.

Consider one such pair \(\{u,v\}\), where \(v\in\{i,j\}\) and \(u\) is a
queried vertex. Suppose \(u\) is queried \(r\) times in the realized transcript.
If \(v\) is returned at least once in those \(r\) queries, then the likelihood
ratio contribution of the pair \(\{u,v\}\) is bounded below by a positive
constant depending only on \(\alpha,\beta\). Indeed, the two possible edge
probabilities are \(p\) and \(q\), so the return event changes likelihood by at
worst a factor comparable to \(p/q\) or \(q/p\).

If \(v\) is never returned in those \(r\) queries, then the probability of this
non-return event is
\[
    1-\rho p \qquad\text{or}\qquad 1-\rho q, \qquad \rho:=1-(1-w)^r\in[0,1],
\]
depending on whether the pair is within-community or between-community. Since
\(p,q=O(\log n/n)\), uniformly over \(\rho\in[0,1]\),
\[
    \left| \log\frac{1-\rho p}{1-\rho q} \right| = O\!\left(\frac{\log n}{n}\right).
\]
Thus a non-return contributes only \(O(\log n/n)\) log-likelihood loss.

Combining these two cases, there exists a constant
\(C_0=C_0(\alpha,\beta,w)\) such that, for the realized transcript,
\[
    \log L_{ij}(T) \ge -C_0\bigl(A_i(T)+A_j(T)\bigr) - C_0 |Q(T)|\frac{\log n}{n}.
\]
The first term accounts for actual returns of \(i\) or \(j\), each of which can
cost only a constant factor in likelihood. The second term accounts for all
queried vertices that did not return \(i\) or \(j\), each of which costs only
\(O(\log n/n)\).

Because \(i,j\in G(T)\),
\[
    A_i(T)\le \varepsilon\log n, \qquad A_j(T)\le \varepsilon\log n.
\]
Also \(|Q(T)|\le k=o(n)\), so
\[
    |Q(T)|\frac{\log n}{n}=o(\log n).
\]
Therefore
\[
    \log L_{ij}(T) \ge -2C_0\varepsilon\log n-o(\log n),
\]
or equivalently
\[
    L_{ij}(T) \ge n^{-2C_0\varepsilon-o(1)}.
\]

Choose \(\varepsilon>0\) small enough that
\[
    2C_0\varepsilon<\frac12.
\]
Then every balanced swap between two good vertices of opposite communities has
likelihood at least
\[
    L_{ij}(T)\ge n^{-1/2-o(1)}.
\]

On the high-probability event above, choose an arbitrary matching
\[
    \mathcal M(T) \subset \bigl(G(T)\cap V^+\bigr)\times\bigl(G(T)\cap V^-\bigr)
\]
of size
\[
    |\mathcal M(T)|=\frac n2-o(n).
\]
For every \((i,j)\in\mathcal M(T)\), the labeling \(\sigma^{(i,j)}\) is balanced
and distinct from \(\sigma^\star\), and satisfies
\[
    \frac{\mathbb P_{\sigma^{(i,j)}}(T)} {\mathbb P_{\sigma^\star}(T)} 
    = L_{ij}(T) \ge n^{-1/2-o(1)}.
\]

Since the prior over balanced labelings is uniform, the posterior mass assigned
to the true labeling is
\[
\begin{aligned}
    \Pi(\sigma^\star\mid T)
    &=
    \frac{\mathbb P_{\sigma^\star}(T)}
         {\sum_{\tau\ \mathrm{balanced}}\mathbb P_\tau(T)}  \\
    &\le
    \frac{\mathbb P_{\sigma^\star}(T)}
         {\mathbb P_{\sigma^\star}(T)
          +
          \sum_{(i,j)\in\mathcal M(T)}
          \mathbb P_{\sigma^{(i,j)}}(T)}  \\
    &=
    \frac{1}
         {1+\sum_{(i,j)\in\mathcal M(T)}L_{ij}(T)}  \\
    &\le
    \frac{1}
         {1+\left(\frac n2-o(n)\right)n^{-1/2-o(1)}}  \\
    &=o(1).
\end{aligned}
\]

Thus, with probability \(1-o(1)\), the true balanced labeling has vanishing
posterior mass. To translate this into an impossibility statement for every
estimator, recall the standard Bayes-risk identity. Conditional on the transcript
\(T\), the best possible estimator is the MAP estimator, whose conditional
success probability is
\[
    \max_{\tau\ \mathrm{balanced}}\Pi(\tau\mid T).
\]
Moreover,
\[
    \mathbb E\bigl[\Pi(\sigma^\star\mid T)\mid T\bigr]
    = \sum_{\tau\ \mathrm{balanced}}\Pi(\tau\mid T)^2.
\]
Since \(\Pi(\sigma^\star\mid T)=o(1)\) with high probability and is always at
most one, we have
\[
    \mathbb E\sum_{\tau\ \mathrm{balanced}}\Pi(\tau\mid T)^2=o(1).
\]
Therefore
\[
    \mathbb E\max_{\tau\ \mathrm{balanced}}\Pi(\tau\mid T)
    \le \mathbb E\left[ \left(\sum_{\tau\ \mathrm{balanced}}\Pi(\tau\mid T)^2\right)^{1/2} \right]
    =o(1).
\]
Hence even the MAP estimator succeeds with probability \(o(1)\), and no
estimator can achieve exact recovery with probability tending to one.

We conclude that every oracle-only exact-recovery strategy must use
\(k=\Omega(n)\) queries.
\end{proof}

\newpage
\section{Omitted proofs from \Cref{sec:recovery}} \label{app:sec:recovery}
\uniformsublinearfails*
\begin{proof}
The proof has one simple idea. When \(k=o(n)\), a balanced uniform
querying rule touches only \(o(n)\) vertices. Hence almost every vertex is
unqueried. For two unqueried vertices of opposite communities, the extra
oracle observations affect only \(o(n)\) possible comparison vertices, while
the \(n-o(n)\) remaining comparisons are exactly those from the subsampled
graph. Thus the local exact-recovery exponent for such vertices is the same
as for \(G_{\rm sub}\) alone.

Write
\[
    p=\alpha\frac{\log n}{n},
    \qquad
    q=\beta\frac{\log n}{n},
    \qquad
    I=(\sqrt{\alpha}-\sqrt{\beta})^2 .
\]
For all sufficiently large \(n\), since \(k=o(n)\), we have \(k<n\). A
balanced uniform non-adaptive strategy queries \(k\) vertices once and leaves
the remaining \(n-k\) vertices unqueried. Let
\[
    Q:=\{i\in[n]:R_i=1\},
    \qquad
    U:=[n]\setminus Q .
\]
Thus \(|Q|=k=o(n)\) and \(|U|=n-o(n)\). We condition throughout on the
queried set \(Q\). Since the querying rule is data-independent, \(Q\) is fixed
in advance, or randomized independently of both the graph and the labels.

For an existing edge \(\{i,j\}\), the edge is observed in the final data if it
is retained in \(G_{\rm sub}\), or if it is revealed by an oracle query to one
of its endpoints. Conditional on the query counts,
\[
    \rho_{ij}
    :=
    \Pr(\{i,j\}\text{ is observed}\mid \{i,j\}\in E)
    =
    1-(1-s)(1-w)^{R_i+R_j}.
\]
In particular, if \(i,j\in U\), then \(\rho_{ij}=s\), whereas if
\(i\in U\) and \(j\in Q\), then
\[
    \rho_{ij}=1-(1-s)(1-w)=s+(1-s)w .
\]
Therefore, for every unqueried vertex \(i\in U\),
\begin{align}
    \sum_{j\neq i}\rho_{ij}
    &=
    \sum_{j\in U\setminus\{i\}}\rho_{ij}
    +
    \sum_{j\in Q}\rho_{ij}                                      \\
    &=
    s(|U|-1)+\bigl(s+(1-s)w\bigr)|Q|                            \\
    &=
    s(n-k-1)+\bigl(s+(1-s)w\bigr)k                              \\
    &=
    sn+o(n).
    \label{eq:uniform-unqueried-exposure}
\end{align}
Thus an unqueried vertex has total exposure \(sn+o(n)\), and the extra
oracle exposure coming from queried vertices contributes only \(o(n)\) possible
comparison vertices.

We now prove impossibility using balanced two-vertex swaps. Let
\[
    U_+ := U\cap\{i:\sigma_i^\star=+1\},
    \qquad
    U_- := U\cap\{i:\sigma_i^\star=-1\}.
\]
Because \(Q\) is independent of the labels and \(|Q|=o(n)\), the exact
balance of the true labeling implies, with probability \(1-o(1)\),
\begin{equation}
    |U_+|=\frac n2-o(n),
    \qquad
    |U_-|=\frac n2-o(n).
    \label{eq:many-unqueried-both-signs}
\end{equation}
We work on this event.

Let \(Y\) denote the final observation, consisting of \(G_{\rm sub}\) and the
oracle-revealed edges. Conditional on \(Q\) and on a candidate labeling
\(\sigma\), the final observed edge indicators are independent over unordered
pairs, and
\[
    Y_{ab}\sim
    \begin{cases}
        \operatorname{Bernoulli}(\rho_{ab}p),
            & \sigma_a=\sigma_b,\\[1mm]
        \operatorname{Bernoulli}(\rho_{ab}q),
            & \sigma_a\neq\sigma_b .
    \end{cases}
\]

For \(i\in U_+\), define the signed one-vertex log-likelihood ratio in favor
of assigning \(i\) the wrong label:
\[
    X_i
    :=
    \sum_{\ell\neq i}
    \sigma_\ell^\star
    \left[
        Y_{i\ell}\log\frac{q}{p}
        +
        (1-Y_{i\ell})
        \log\frac{1-\rho_{i\ell}q}{1-\rho_{i\ell}p}
    \right].
\]
Indeed, if \(\sigma_\ell^\star=+1\), then under the true label
\(\sigma_i=+1\) the pair \(\{i,\ell\}\) is within-community, while under the
wrong label \(\sigma_i=-1\) it is between-community. The corresponding
single-pair log-likelihood ratio is
\[
   Y_{i\ell}\log\frac{\rho_{i\ell}q}{\rho_{i\ell}p}
   +(1-Y_{i\ell})\log\frac{1-\rho_{i\ell}q}{1-\rho_{i\ell}p}
   =
   Y_{i\ell}\log\frac{q}{p}
   +(1-Y_{i\ell})\log\frac{1-\rho_{i\ell}q}{1-\rho_{i\ell}p}.
\]
If instead \(\sigma_\ell^\star=-1\), the roles of \(p\) and \(q\) are
reversed, and the log-likelihood ratio changes sign. This is exactly captured
by the factor \(\sigma_\ell^\star\). Hence \(X_i\) is the log-likelihood ratio
between assigning \(i\) the wrong label \(-1\) and assigning it the true label
\(+1\), with all other labels fixed at their true values.

For \(j\in U_-\), define analogously
\[
    X_j
    :=
    -\sum_{\ell\neq j}
    \sigma_\ell^\star
    \left[
        Y_{j\ell}\log\frac{q}{p}
        +
        (1-Y_{j\ell})
        \log\frac{1-\rho_{j\ell}q}{1-\rho_{j\ell}p}
    \right],
\]
so that \(X_j\) is the log-likelihood ratio in favor of assigning \(j\) the
wrong label \(+1\).

We first estimate the probability that one such unqueried vertex has
atypically strong evidence for the wrong label. Fix \(i\in U_+\). Conditional
on the labels, \(X_i\) is a sum of independent terms. The terms corresponding
to queried vertices contribute negligibly to the logarithmic large-deviation
scale. More precisely, for any fixed \(\theta\) in a neighborhood of \(0\),
each \(\ell\in Q\) contributes
\[
   \log\mathbb E\exp\left\{
        \theta\,\sigma_\ell^\star
        \left[
            Y_{i\ell}\log\frac{q}{p}
            +(1-Y_{i\ell})
            \log\frac{1-\rho_{i\ell}q}{1-\rho_{i\ell}p}
        \right]
   \right\}
   =
   O\left(\frac{\log n}{n}\right),
\]
uniformly in \(\rho_{i\ell}\in[0,1]\). This is because
\(\Pr(Y_{i\ell}=1)=O(\log n/n)\), the edge-present log-ratio is \(O(1)\), and
the edge-absent log-ratio is \(O(\log n/n)\). Since \(|Q|=k=o(n)\), the total
contribution of queried vertices to \(\log\mathbb E e^{\theta X_i}\) is
\(o(\log n)\). Hence the queried vertices contribute \(o(1)\) to the
normalized cumulant generating function at scale \(\log n\). Equation
\eqref{eq:uniform-unqueried-exposure} also shows that their first-moment
contribution is \(o(n)\), so they do not shift the typical value of \(X_i\) at
the relevant scale.

The remaining \(n-o(n)\) comparison vertices are unqueried and therefore have
retention probability exactly \(s\). Thus the usual one-vertex Cramér
calculation for the logarithmic-degree SBM applies with effective parameters
\(s\alpha\) and \(s\beta\). Consequently,
\[
    \Pr_{\sigma^\star}(X_i\ge 0)
    =
    n^{-sI/2+o(1)}.
\]
More generally, for fixed \(t\) in a neighborhood of \(0\),
\[
    \Pr_{\sigma^\star}(X_i\ge t\log n)
    =
    n^{-J(t)+o(1)},
\]
where \(J\) is the corresponding Cramér rate function and
\[
    J(0)=\frac{sI}{2}.
\]
The matching lower bound is a standard strong large-deviation estimate for
triangular arrays. In particular, it follows from the Bahadur--Rao-type theorem
of Chaganty and Sethuraman~\cite{chaganty1993strong}, applied to the
log-moment generating functions of the independent Bernoulli summands. In the
logarithmic-degree SBM, the same logarithmic asymptotic is computed explicitly
in the converse proof of Abbe, Bandeira, and Hall~\cite{ABH16};
see also Mossel, Neeman, and Sly~\cite{mossel2015consistency} for the corresponding
exact-recovery converse. 

Since \(sI<2\), we have \(J(0)<1\). By continuity of \(J\), choose a
sufficiently small fixed \(\delta>0\) such that
\[
    \mu:=J(\delta)<1.
\]
Then, uniformly for \(i\in U_+\),
\begin{equation}
    \Pr_{\sigma^\star}(X_i\ge \delta\log n)
    =
    n^{-\mu+o(1)}.
    \label{eq:single-vertex-bad}
\end{equation}
The same estimate holds symmetrically for \(j\in U_-\).

Define
\[
    Z_+ := \bigl|\{i\in U_+: X_i\ge \delta\log n\}\bigr|,
    \qquad
    Z_- := \bigl|\{j\in U_-: X_j\ge \delta\log n\}\bigr|.
\]
By \eqref{eq:many-unqueried-both-signs} and
\eqref{eq:single-vertex-bad},
\[
    \mathbb E_{\sigma^\star} Z_+
    =
    n^{1-\mu+o(1)}
    \to\infty,
    \qquad
    \mathbb E_{\sigma^\star} Z_-
    =
    n^{1-\mu+o(1)}
    \to\infty .
\]

We next show concentration. For distinct \(i,i'\in U_+\), the score \(X_i\)
depends on the edge variables \(\{Y_{i\ell}:\ell\neq i\}\), while \(X_{i'}\)
depends on \(\{Y_{i'\ell}:\ell\neq i'\}\). These two collections share exactly
one unordered edge variable, namely \(Y_{ii'}\). All other edge observations
entering the two sums are disjoint and hence independent conditional on the
labels.

Write
\[
    X_i=S_i+T_{ii'},
    \qquad
    X_{i'}=S_{i'}+T_{i'i},
\]
where \(T_{ii'}\) and \(T_{i'i}\) are the contributions of the shared edge
\(\{i,i'\}\). Conditional on \(Y_{ii'}\), the variables \(S_i\) and
\(S_{i'}\) are independent. Moreover,
\[
    \Pr_{\sigma^\star}(Y_{ii'}=1)=O\left(\frac{\log n}{n}\right).
\]
If \(Y_{ii'}=0\), then \(T_{ii'},T_{i'i}=O(\log n/n)\), so the threshold
\(\delta\log n\) is shifted only by \(o(1)\). The sharp large-deviation
estimate above is stable under such an \(o(1)\) shift, giving the same tail
probability up to a factor \(1+o(1)\). If \(Y_{ii'}=1\), then
\(T_{ii'},T_{i'i}=O(1)\); this changes the logarithmic exponent only by
\(o(1)\), while the event \(Y_{ii'}=1\) itself has probability
\(O(\log n/n)\), making this contribution negligible compared to the product
of the two one-vertex tail probabilities. Therefore, uniformly over distinct
\(i,i'\in U_+\),
\[
    \Pr_{\sigma^\star}
    (X_i\ge \delta\log n,\ X_{i'}\ge \delta\log n)
    =
    (1+o(1))
    \Pr_{\sigma^\star}(X_i\ge \delta\log n)
    \Pr_{\sigma^\star}(X_{i'}\ge \delta\log n).
\]
It follows that
\[
    \operatorname{Var}(Z_+)=o((\mathbb E Z_+)^2).
\]
Hence, by Chebyshev's inequality, \(Z_+\ge 1\) with probability \(1-o(1)\).
The same argument gives \(Z_-\ge 1\) with probability \(1-o(1)\). By a union
bound, with probability \(1-o(1)\) both events occur.

On the event \(\{Z_+\ge 1,\ Z_-\ge 1\}\), choose
\(i\in U_+\) and \(j\in U_-\) such that
\[
    X_i\ge \delta\log n,
    \qquad
    X_j\ge \delta\log n.
\]
Let \(\sigma^{(i,j)}\) be the balanced labeling obtained from
\(\sigma^\star\) by swapping the labels of \(i\) and \(j\):
\[
    \sigma^{(i,j)}_i=-1,\qquad
    \sigma^{(i,j)}_j=+1,\qquad
    \sigma^{(i,j)}_\ell=\sigma^\star_\ell
    \quad\text{for all }\ell\notin\{i,j\}.
\]
Let
\[
    L_{ij}(Y)
    :=
    \frac{\mathbb P_{\sigma^{(i,j)}}(Y)}
         {\mathbb P_{\sigma^\star}(Y)}
\]
be the likelihood ratio between the swapped labeling and the true labeling.

Pairs \(\{u,v\}\) with \(u,v\notin\{i,j\}\) have identical distributions under
\(\sigma^\star\) and \(\sigma^{(i,j)}\), so they contribute \(0\) to
\(\log L_{ij}(Y)\). Pairs \(\{i,\ell\}\) with \(\ell\notin\{i,j\}\) contribute
the corresponding terms in \(X_i\), and pairs \(\{j,\ell\}\) with
\(\ell\notin\{i,j\}\) contribute the corresponding terms in \(X_j\). The pair
\(\{i,j\}\) is between-community under both \(\sigma^\star\) and
\(\sigma^{(i,j)}\), so it contributes \(0\) to the likelihood ratio of the
balanced swap.

The only bookkeeping issue is that the one-vertex scores \(X_i\) and \(X_j\)
include the terms corresponding to the pair \(\{i,j\}\), even though this pair
does not contribute to the balanced-swap likelihood ratio. Therefore
\[
    \log L_{ij}(Y)
    =
    X_i+X_j
    -
    \bigl[\text{the \(\ell=j\) term in \(X_i\)}\bigr]
    -
    \bigl[\text{the \(\ell=i\) term in \(X_j\)}\bigr].
\]
Each of the two subtracted terms is \(O(1)\) if \(Y_{ij}=1\), and
\(O(\log n/n)\) if \(Y_{ij}=0\). Hence, uniformly in \(Y\),
\[
    \log L_{ij}(Y)=X_i+X_j+O(1).
\]
Consequently,
\[
    \log L_{ij}(Y)
    \ge
    2\delta\log n-O(1)
    >
    0
\]
for all sufficiently large \(n\). Thus
\[
    \mathbb P_{\sigma^{(i,j)}}(Y)
    >
    \mathbb P_{\sigma^\star}(Y).
\]
Therefore, with probability \(1-o(1)\), there exists a balanced labeling
\(\sigma^{(i,j)}\neq\sigma^\star\) whose likelihood is strictly larger than
the likelihood of the true labeling. Since the prior over balanced labelings
is uniform, the maximum-a-posteriori estimator cannot return
\(\sigma^\star\) on this event. Hence exact recovery is impossible with
probability tending to one.

The argument was carried out after conditioning on the queried set \(Q\) and
on the high-probability event \eqref{eq:many-unqueried-both-signs}. Removing
this conditioning changes the failure probability by only \(o(1)\). The use
of data-independence of \(Q\) is essential in two places: first, it allows us
to condition on \(Q\) and retain a product measure over the edge observations
with deterministic exposures \(\rho_{ab}\); second, it gives
\eqref{eq:many-unqueried-both-signs}, since \(Q\) is independent of the true
labels. Thus the conclusion applies to any non-adaptive allocation that is
fixed in advance or randomized independently of \((G_{\rm sub},\sigma^\star)\).
It does not apply to allocations that depend on \(G_{\rm sub}\).

We conclude that exact recovery under balanced uniform non-adaptive querying
is impossible with high probability whenever
\[
    s(\sqrt{\alpha}-\sqrt{\beta})^2=sI<2 .
\]
This proves the lemma.
\end{proof}

\adaptivesublinearsucceed*
\begin{proof}
	All label recovery statements are understood up to the unavoidable global
	sign flip. We prove the lemma by giving one concrete choice of parameters and
	one concrete two-stage strategy.
	
	\paragraph{Step 1: Choose parameters.}
	Set
	\[
	\alpha=16,\qquad
	\beta=1,\qquad
	w=\frac12,
	\qquad
	(\sqrt{\alpha}-\sqrt{\beta})^2=9.
	\]
	Choose a sufficiently small fixed $\epsilon>0$ and set
	\[
	s=\frac{2(1-\epsilon)}{9}.
	\]
	Then the subsampled graph $G_{\rm sub}$ has effective exact-recovery signal
	\[
	s(\sqrt{\alpha}-\sqrt{\beta})^2
	=
	2(1-\epsilon)
	<2.
	\]
	Thus $G_{\rm sub}$ alone is below the sharp Abbe--Bandeira--Hall threshold
	for exact recovery in the logarithmic-degree SBM. This is useful context:
	the subsampled graph alone cannot exactly recover, so the oracle queries must
	be targeted to repair the remaining ambiguous vertices.
	
	Let
	\[
	c:=\frac{(\sqrt{\alpha}-\sqrt{\beta})^2}{2}
	=
	\frac92.
	\]
	Then
	\[
	cs=1-\epsilon.
	\]
	
	\paragraph{Step 2: A leave-one-out near-exact initial estimator.}
	We use the following standard near-exact recovery fact for the
	logarithmic-degree SBM. There exist estimators
	$\widehat\sigma^{(0,-v)}$, one for each $v\in[n]$, such that
	$\widehat\sigma^{(0,-v)}$ is computed from $G_{\rm sub}$ after deleting all
	edges incident to $v$, and such that, after orienting all estimators by a
	common deterministic anchor convention,
	\begin{equation}
		\label{eq:loo_near_exact_appendix}
		\max_{v\in[n]}
		\mathrm{Ham}\bigl(\widehat\sigma^{(0,-v)},\sigma^\star\bigr)
		\le
		n^{1-cs+o(1)}
		=
		n^{\epsilon+o(1)}
	\end{equation}
	with high probability.
	
	Intuitively, $G_{\rm sub}$ is below the exact-recovery threshold but still
	contains enough information to recover all but a subpolynomially small
	fraction of the labels. The leave-one-out construction is used only to
	ensure independence: $\widehat\sigma^{(0,-v)}$ is computed without looking at
	any subsampled edge incident to $v$, so it is independent of those incident
	edges conditional on the true labels.
	
	\paragraph{Step 3: Split the neighbors of each vertex into a screening set
		and a refinement set.}
	We now separate the information used to decide whether $v$ is ambiguous from
	the information used later to refine $v$ by oracle queries. This separation is
	the key technical point.
	
	Set
	\[
	b_0:=\frac{7}{20},
	\qquad
	c_0:=\frac{13}{20}.
	\]
	For every $v$, choose deterministic disjoint sets
	$B_v,C_v\subset[n]\setminus\{v\}$ satisfying
	\[
	|B_v|=b_0n+o(n),
	\qquad
	|C_v|=c_0n+o(n).
	\]
	For example, one can choose these sets by a fixed cyclic rule independent of
	the graph and of the labels. The set $C_v$ will be used only to screen $v$,
	and the set $B_v$ will be used only for the second-stage oracle refinement.
	
	Because the true labels are balanced and the sets $B_v,C_v$ are fixed
	independently of the labels, hypergeometric concentration and a union bound
	over $v$ imply that, with high probability, simultaneously for all $v$,
	\[
	|B_v\cap\{u:\sigma_u^\star=+1\}|
	=
	\frac{b_0n}{2}+o(n),
	\qquad
	|B_v\cap\{u:\sigma_u^\star=-1\}|
	=
	\frac{b_0n}{2}+o(n),
	\]
	and similarly
	\[
	|C_v\cap\{u:\sigma_u^\star=+1\}|
	=
	\frac{c_0n}{2}+o(n),
	\qquad
	|C_v\cap\{u:\sigma_u^\star=-1\}|
	=
	\frac{c_0n}{2}+o(n).
	\]
	
	\paragraph{Step 4: Define the screening score.}
	Write
	\[
	p=\alpha\frac{\log n}{n},
	\qquad
	q=\beta\frac{\log n}{n}.
	\]
	For each vertex $v$, define the leave-one-out screening score
	\[
	\Lambda_v^{\rm loo}
	:=
	\sum_{u\in C_v}
	\widehat\sigma^{(0,-v)}_u
	\left[
	A^{\rm sub}_{uv}\log\frac pq
	+
	(1-A^{\rm sub}_{uv})
	\log\frac{1-sp}{1-sq}
	\right].
	\]
	This is the log-likelihood evidence, computed from the subsampled graph, for
	whether $v$ has label $+1$ or $-1$, using the leave-one-out estimates of the
	labels of vertices in $C_v$ as references.
	
	We screen vertices whose evidence is not strong. Let
	\[
	\tau:=\frac{21}{20}
	\]
	and define
	\[
	\mathcal A
	:=
	\left\{
	v:
	|\Lambda_v^{\rm loo}|\le \tau\log n
	\right\}.
	\]
	The intended interpretation is simple: vertices outside $\mathcal A$ have a
	large positive or negative likelihood score and will be trusted; vertices in
	$\mathcal A$ are ambiguous and will be queried.
	
	\paragraph{Step 5: Analyze the ideal genie-aided screening score.}
	First replace the estimated reference labels
	$\widehat\sigma^{(0,-v)}_u$ by the true labels $\sigma_u^\star$. Define
	\[
	\Lambda_v^\star
	:=
	\sum_{u\in C_v}
	\sigma^\star_u
	\left[
	A^{\rm sub}_{uv}\log\frac pq
	+
	(1-A^{\rm sub}_{uv})
	\log\frac{1-sp}{1-sq}
	\right].
	\]
	We compute the large-deviation rate of $\Lambda_v^\star$ conditional on
	$\sigma_v^\star=+1$.
	
	For a reference vertex $u$ with $\sigma_u^\star=+1$, the subsampled edge
	indicator satisfies
	\[
	A^{\rm sub}_{uv}\sim \operatorname{Bernoulli}(sp).
	\]
	For such a vertex, the single-edge log-likelihood contribution is
	\[
	X_p
	=
	A^{\rm sub}_{uv}\log\frac pq
	+
	(1-A^{\rm sub}_{uv})
	\log\frac{1-sp}{1-sq}.
	\]
	For fixed $\theta$, using
	$p=\alpha\log n/n$ and $q=\beta\log n/n$, we have
	\[
	\log \mathbb E e^{\theta X_p}
	=
	\frac{s\log n}{n}
	\left[
	\alpha\left(\left(\frac{\alpha}{\beta}\right)^\theta-1-\theta\right)
	+\theta\beta
	\right]
	+
	o\left(\frac{\log n}{n}\right).
	\]
	For a reference vertex $u$ with $\sigma_u^\star=-1$, we have
	$A^{\rm sub}_{uv}\sim\operatorname{Bernoulli}(sq)$ and the score contributes
	the negative of the same log-likelihood expression. Hence
	\[
	\log \mathbb E e^{\theta X_q}
	=
	\frac{s\log n}{n}
	\left[
	\beta\left(\left(\frac{\beta}{\alpha}\right)^\theta-1-\theta\right)
	+\theta\alpha
	\right]
	+
	o\left(\frac{\log n}{n}\right).
	\]
	The linear terms in $\theta$ cancel when we sum over asymptotically equal
	numbers of $+1$ and $-1$ reference labels in $C_v$. Since
	$|C_v|=c_0n+o(n)$ and $C_v$ contains asymptotically half of each community,
	the normalized cumulant generating function of $\Lambda_v^\star$ is
	\[
	\psi_{c_0}(\theta)
	:=
	\lim_{n\to\infty}
	\frac{1}{\log n}
	\log
	\mathbb E\left[
	e^{\theta\Lambda_v^\star}
	\,\middle|\,
	\sigma_v^\star=+1
	\right]
	\]
	with
	\[
	\psi_{c_0}(\theta)
	=
	c_0\frac{s}{2}
	\left[
	\alpha\left(\left(\frac{\alpha}{\beta}\right)^\theta-1\right)
	+
	\beta\left(\left(\frac{\beta}{\alpha}\right)^\theta-1\right)
	\right].
	\]
	Let
	\[
	I_{c_0}(t)
	:=
	\sup_{\theta\in\mathbb R}
	\{\theta t-\psi_{c_0}(\theta)\}
	\]
	be the associated rate function. The Chernoff bound gives, for fixed $t$
	below the typical value of $\Lambda_v^\star/\log n$,
	\[
	\Pr\left(
	\Lambda_v^\star\le t\log n
	\,\middle|\,
	\sigma_v^\star=+1
	\right)
	\le
	n^{-I_{c_0}(t)+o(1)}.
	\]
	
	We now evaluate this rate function for our constants. At $\epsilon=0$,
	$s=2/9$, and therefore
	\[
	\psi_{c_0}(\theta)
	=
	\frac{13}{180}
	\left[
	16(16^\theta-1)+16^{-\theta}-1
	\right].
	\]
	The optimizer in
	\[
	I_{c_0}(t)
	=
	\sup_{\theta\in\mathbb R}
	\{\theta t-\psi_{c_0}(\theta)\}
	\]
	is the unique solution of $\psi_{c_0}'(\theta)=t$. Since
	\[
	\psi_{c_0}'(\theta)
	=
	\frac{13}{180}\log 16
	\left[
	16\cdot 16^\theta-16^{-\theta}
	\right],
	\]
	this is a one-dimensional equation. Direct evaluation at
	$t=\tau=21/20$ and $t=-\tau=-21/20$ gives
	\[
	I_{c_0}\left(\frac{21}{20}\right)
	\approx 0.24515,
	\qquad
	I_{c_0}\left(-\frac{21}{20}\right)
	\approx 1.29515.
	\]
	Hence, for $\epsilon>0$ sufficiently small, by continuity of the rate
	function in $s=2(1-\epsilon)/9$,
	\begin{equation}
		\label{eq:screening_rate_bounds_appendix}
		I_{c_0}(\tau)>0.24,
		\qquad
		I_{c_0}(-\tau)>1.25.
	\end{equation}
	
	\paragraph{Step 6: Replace the genie labels by the leave-one-out labels.}
	We now compare $\Lambda_v^{\rm loo}$ with $\Lambda_v^\star$. By
	\eqref{eq:loo_near_exact_appendix}, uniformly over $v$, the number of
	reference labels in $C_v$ on which
	$\widehat\sigma^{(0,-v)}$ disagrees with $\sigma^\star$ is at most
	$n^{\epsilon+o(1)}$.
	
	The effect of these wrong reference labels is negligible on the
	$\log n$ scale. Indeed, for any fixed $\theta$, changing the sign of a single
	reference label changes the corresponding single-vertex log-moment
	generating function by at most $O(\log n/n)$, because the edge itself appears
	with probability only $O(\log n/n)$. Therefore the total change in the
	log-moment generating function caused by all wrong reference labels is at
	most
	\[
	n^{\epsilon+o(1)}
	O\left(\frac{\log n}{n}\right)
	=
	o(1).
	\]
	In particular, the large-deviation exponents for $\Lambda_v^{\rm loo}$ are
	the same as those for the genie-aided score $\Lambda_v^\star$, up to $o(1)$.
	
	Combining this perturbation bound with
	\eqref{eq:screening_rate_bounds_appendix}, we get
	\[
	\Pr\left(
	|\Lambda_v^{\rm loo}|\le \tau\log n
	\right)
	\le
	n^{-0.24+o(1)}.
	\]
	Indeed, conditional on $\sigma_v^\star=+1$, the event
	$|\Lambda_v^{\rm loo}|\le\tau\log n$ is contained in the lower-tail event
	$\Lambda_v^{\rm loo}\le\tau\log n$, and the case
	$\sigma_v^\star=-1$ is symmetric.
	
	Similarly,
	\[
	\Pr\left(
	\Lambda_v^{\rm loo}\le -\tau\log n
	\,\middle|\,
	\sigma_v^\star=+1
	\right)
	\le
	n^{-1.25+o(1)},
	\]
	and, by symmetry,
	\[
	\Pr\left(
	\Lambda_v^{\rm loo}\ge \tau\log n
	\,\middle|\,
	\sigma_v^\star=-1
	\right)
	\le
	n^{-1.25+o(1)}.
	\]
	
	\paragraph{Step 7: The screen is small and all unscreened vertices are
		correct.}
	Since
	\[
	\mathcal A
	=
	\{v:|\Lambda_v^{\rm loo}|\le\tau\log n\},
	\]
	the previous bound gives
	\[
	\mathbb E|\mathcal A|
	\le
	n^{0.76+o(1)}.
	\]
	By Markov's inequality,
	\begin{equation}
		\label{eq:screen_size_appendix}
		|\mathcal A|\le n^{0.77}
	\end{equation}
	with high probability.
	
	Next consider a vertex outside the screen. If $\sigma_v^\star=+1$ and
	$v\notin\mathcal A$ but the score has the wrong sign, then necessarily
	\[
	\Lambda_v^{\rm loo}<-\tau\log n.
	\]
	This event has probability at most $n^{-1.25+o(1)}$. The same argument
	applies symmetrically when $\sigma_v^\star=-1$. A union bound over all
	vertices gives
	\[
	\Pr\left(
	\exists v\notin\mathcal A:
	\operatorname{sign}(\Lambda_v^{\rm loo})
	\neq
	\sigma_v^\star
	\right)
	\le
	n\cdot n^{-1.25+o(1)}
	=
	n^{-0.25+o(1)}
	=
	o(1).
	\]
	Thus, with high probability, every vertex outside $\mathcal A$ is already
	correctly labeled by $\operatorname{sign}(\Lambda_v^{\rm loo})$.
	
	\paragraph{Step 8: Refine the screened vertices using fresh oracle queries.}
	We now query every vertex in $\mathcal A$ once. The second-stage classifier
	for $v\in\mathcal A$ uses only oracle responses from $v$ to vertices in
	$B_v$, together with the reference labels
	$\widehat\sigma^{(0,-v)}_u$ for $u\in B_v$.
	
	The separation between $B_v$ and $C_v$ is crucial. The event
	$\{v\in\mathcal A\}$ depends on subsampled incident edges from $v$ only
	through pairs $\{u,v\}$ with $u\in C_v$. The second-stage oracle likelihood
	test uses only pairs $\{u,v\}$ with $u\in B_v$, and $B_v\cap C_v=\emptyset$.
	Moreover, $\widehat\sigma^{(0,-v)}$ was computed after deleting all
	subsampled edges incident to $v$. Therefore, conditional on the first-stage
	data and on the event $v\in\mathcal A$, the underlying edges from $v$ to
	$B_v$ are still independent SBM edges, and the oracle coins used in the
	second stage are fresh.
	
	For one oracle query, an edge from $v$ to $u$ is revealed with probability
	$w$ if it exists. Thus the effective second-stage edge probabilities are
	\[
	wp=w\alpha\frac{\log n}{n},
	\qquad
	wq=w\beta\frac{\log n}{n}.
	\]
	Using the same single-vertex Bhattacharyya calculation as in the standard
	SBM exact-recovery analysis, but now restricted to the reference set $B_v$,
	the error probability for classifying a fixed screened vertex is
	\[
	\Pr\left(
	\widehat\sigma_v^{(1)}\neq\sigma_v^\star
	\,\middle|\,
	v\in\mathcal A
	\right)
	\le
	n^{-b_0 w(\sqrt{\alpha}-\sqrt{\beta})^2/2+o(1)}.
	\]
	Here $B_v$ has asymptotic size $b_0n$ and contains asymptotically half of
	each community.
	
	The leave-one-out reference labels on $B_v$ are not perfectly correct, but
	by \eqref{eq:loo_near_exact_appendix} they contain at most
	$n^{\epsilon+o(1)}=o(n/\log n)$ errors, uniformly in $v$. As before, these
	wrong reference labels perturb the likelihood exponent by only $o(1)$ and
	do not change the displayed exponent.
	
	For our constants,
	\[
	\frac{b_0w}{2}(\sqrt{\alpha}-\sqrt{\beta})^2
	=
	\frac{7/20}{2}\cdot\frac12\cdot 9
	=
	\frac{63}{80}
	=
	0.7875.
	\]
	Therefore
	\[
	\Pr\left(
	\widehat\sigma_v^{(1)}\neq\sigma_v^\star
	\,\middle|\,
	v\in\mathcal A
	\right)
	\le
	n^{-0.7875+o(1)}.
	\]
	
	On the high-probability event \eqref{eq:screen_size_appendix},
	$|\mathcal A|\le n^{0.77}$. Hence a union bound over all screened vertices
	gives
	\[
	\Pr\left(
	\exists v\in\mathcal A:
	\widehat\sigma_v^{(1)}\neq\sigma_v^\star
	\right)
	\le
	n^{0.77}\cdot n^{-0.7875+o(1)}
	=
	n^{-0.0175+o(1)}
	=
	o(1).
	\]
	Thus all screened vertices are correctly recovered after one oracle query
	each.
	
	\paragraph{Step 9: Query budget and conclusion.}
	Choose
	\[
	k=\left\lceil n^{0.78}\right\rceil.
	\]
	By \eqref{eq:screen_size_appendix}, with high probability this budget is
	enough to query every vertex in $\mathcal A$ once. Since $0.78<1$, we have
	$k=o(n)$.
	
	With high probability, all vertices outside $\mathcal A$ are correctly
	labeled by the screening score, and all vertices inside $\mathcal A$ are
	correctly labeled by the second-stage oracle refinement. Therefore the
	two-stage targeted strategy achieves exact recovery with probability
	$1-o(1)$ using $k=o(n)$ oracle queries.
\end{proof}

\adaptivemain*
\begin{proof}
By \Cref{lemma:uniform_sublinear_fails}, when $k=o(n)$, uniform non-adaptive querying does not improve the exact-recovery threshold beyond that of the subsampled graph $G_{\rm sub}$ alone. Hence, for parameter choices satisfying
$s(\sqrt{\alpha}-\sqrt{\beta})^2<2$,
any uniform querying strategy fails with high probability. 

On the other hand, by \Cref{lemma:adaptive_sublinear_succeed}, there exist constants $(\alpha,\beta,s,w)$ and a sublinear budget $k=o(n)$ for which a two-stage strategy that uses $G_{\rm sub}$ to target its queries succeeds at exact recovery with high probability. Therefore, for the same observation model and the same asymptotic budget regime, uniform querying fails while a $G_{\rm sub}$-dependent non-uniform strategy succeeds. This establishes a strict adaptivity gap for \Cref{problem2}. 
\end{proof}

\end{document}